\documentclass[aps,pra,superscriptaddress,reprint]{revtex4-2}

\usepackage{amsmath,amssymb,amsfonts,bm}
\usepackage{mathtools}
\usepackage{physics}
\usepackage{graphicx}
\usepackage{xcolor}
\usepackage[colorlinks=true,linkcolor=black,citecolor=black,urlcolor=black]{hyperref}
\mathtoolsset{showonlyrefs=true}
\usepackage{comment}

\usepackage{amsthm}
\theoremstyle{plain}
\newtheorem{theorem}{Theorem}
\newtheorem{proposition}[theorem]{Proposition}
\newtheorem{lemma}[theorem]{Lemma}
\newtheorem{corollary}[theorem]{Corollary}
\theoremstyle{definition}
\newtheorem{definition}[theorem]{Definition}

\theoremstyle{remark}

\makeatletter
\def\frontmatter@RRAP@format{%
  \small
  \centering
}%
\def\punct@RRAP{; }%
\makeatother

\newcommand{\X}{\mathcal{X}}

\newcommand{\Dcal}{\mathcal{D}}
\newcommand{\Kcal}{\mathcal{K}}
\newcommand{\Bcal}{\mathcal{B}}
\newcommand{\Hcal}{\mathcal{H}}
\newcommand{\E}{\mathbb{E}}
\newcommand{\Id}{I}
\newcommand{\OPT}{\mathrm{OPT}}
\newcommand{\Cut}{\mathrm{Cut}}

\begin{document}

\title{Decoder-Consistent Hamiltonians for Quantum Relaxations via POVM Pullbacks}
\author{Takayuki Suzuki}
\affiliation{Intellectual Property \& Technology Strategy Division, SCSK Corporation, Koto, Tokyo 135-8110, Japan}
\date[]{\today}

\begin{abstract}
In compression-based quantum relaxations such as quantum random access optimization (QRAO), a quantum state is optimized on a reduced number of qubits and then decoded into a classical solution on which the objective function is evaluated. The Hermitian operator whose expectation value is optimized should therefore be consistent with the expected objective value after decoding. For a fixed single-shot POVM decoder, we define the decoder-consistent Hamiltonian as the pullback of the classical objective observable through the decoder. Its expectation value exactly equals the expected post-decoding objective value for every quantum state. We further show that any Hermitian operator inducing the same ordering over the full quantum state space must be a positive affine transformation of the decoder-consistent Hamiltonian. Using a Boolean Fourier expansion, the decoder-consistent Hamiltonian decomposes into pullbacks of individual Fourier components. This provides a common framework for comparing direct encoding, simultaneous-decoding POVMs for QRAC blocks, and the effective POVM induced by the magic rounding of Teramoto et al. in terms of which components they preserve and at what rates. For standard magic-state rounding in coloring-based QRAO for MaxCut, the framework recovers the known positive affine relation between the relaxed-Hamiltonian expectation and the expected rounded cut value. For the simultaneous-decoding POVMs analyzed here, we further show that the analogous relation generally fails for mixed-degree QUBO objectives because one- and two-body components are attenuated at different rates. We also present a systematic construction of POVM decoder blocks from prescribed Fourier components and derive corresponding lower bounds on the expected post-decoding objective value under explicit conditions.
\end{abstract}

\maketitle

\section{Introduction}
\label{sec:introduction}

Combinatorial optimization is among the leading candidate application areas for near-term quantum computing. The quantum approximate optimization algorithm (QAOA) and related variational algorithms encode a classical objective function into a cost Hamiltonian and optimize its expectation value within a parameterized family of quantum states~\cite{farhi2014,hadfield2019,zhu2022}.
However, due to limited qubit counts, high measurement costs, and noise in currently available quantum devices, the problem sizes accessible via one-to-one direct encoding are limited.

For this reason, quantum relaxation methods have been studied, where multiple classical variables are compressed into a smaller number of qubits, and after variationally optimizing the expectation value of a quantum Hamiltonian, the resulting quantum state is decoded into a classical solution. For example, quantum random access optimization (QRAO), based on quantum random access codes (QRACs)~\cite{ambainis1999,nayak1999}, was proposed in Ref.~\cite{fuller2024}. In QRAO-type methods, multiple classical variables are assigned to noncommuting Pauli operators, allowing a Hamiltonian to be constructed on fewer qubits than in direct encoding. Related extensions and comparative studies have since been actively explored~\cite{fuller2024,teramoto2023,he2025,kondo2025}.

In this class of compression-based methods, the decoder that converts quantum states into classical solutions and the Hamiltonian used in variational quantum algorithms are often specified as separate components of the optimization procedure. This procedural separation does not by itself imply a mismatch. In particular, for MaxCut with magic-state rounding, Fuller et al. established that the expected rounded cut is linearly related to the relaxed-Hamiltonian expectation, and a related affine relation was subsequently made explicit in the RQRAO framework~\cite{fuller2024,kondo2025}. The general question is therefore how to characterize, for a fixed decoder and a general classical objective, the Hermitian observable whose expectation equals the expected decoded objective.

We provide a general framework for characterizing the relation between a fixed decoder and the quantum objective used in compression-based quantum optimization. The framework identifies the decoder-consistent observable, determines when another Hermitian objective induces the same ordering of quantum states, and turns Fourier-component pullbacks into a principle for decoder design. Our focus is this structural characterization rather than a speedup for a particular problem class.

We consider fixed single-shot decoders, including fixed classical post-processing of measurement outcomes, and represent each such decoder as a POVM. Using the decoder's dual map, we pull back the classical objective function to a Hermitian operator on the quantum Hilbert space. For a POVM decoder $\Dcal$ with effects $\{M_s\}_{s\in\X}$ and an objective function $f:\X\to\mathbb R$, we write
\[
H_{\Dcal}[f]:=\sum_{s\in\X}f(s)M_s,
\]
and call it the decoder-consistent Hamiltonian. Its expectation value exactly equals the expected objective value after decoding for every quantum state. Importantly, the quantity optimized here is the expected post-decoding objective value, not the probability that the decoder outputs an optimal solution.

Furthermore, we show that this Hamiltonian is essentially unique unless it is proportional to the identity: any Hermitian operator that induces the same ordering over all quantum states is a positive affine transformation of $H_{\Dcal}[f]$. Thus, once the decoder and objective function are fixed, the Hermitian operator whose expectation value is to be optimized is uniquely determined up to a positive affine transformation.

%This requirement differs from the code-state reproduction condition often used in conventional QRAO-type relaxations. Conventional constructions rescale the Hamiltonian so that its expectation value on each code state reproduces the corresponding classical objective value, whereas our condition requires the Hamiltonian expectation value to equal the expected objective value after a fixed decoder for every quantum state. From this decoder-first viewpoint, we can directly compare the expectation value of a conventional QRAO-type relaxation Hamiltonian with the expected objective value actually obtained after decoding the quantum state into a classical solution.

This decoder-first requirement differs from the code-state reproduction condition used in conventional QRAO-type relaxations. Whereas conventional constructions reproduce the classical objective on code states, decoder consistency reproduces the expected post-decoding objective value for every quantum state. The resulting criterion determines whether a code-state-compensated relaxation and a specified decoder induce the same ordering over the full quantum state space; Section~\ref{sec:qrao-reinterpretation} applies it to selected QRAO constructions.

An explicit matrix representation of $H_{\Dcal}[f]$ is not required in order to optimize its expectation value. If, at each variational parameter setting, one samples classical outputs from the fixed decoder and uses the sample mean of their objective values, then this sample mean is an unbiased estimator of the decoder-consistent expectation. Explicitly, for independent outputs $S_1,\ldots,S_N\sim\Dcal(\rho)$,
\[
\widehat F_N(\rho):=\frac1N\sum_{r=1}^N f(S_r),
\qquad
\mathbb E[\widehat F_N(\rho)]=\Tr(H_{\Dcal}[f]\rho).
\]
Thus, decoder consistency may be implemented implicitly through sampling. The comparison developed below is relevant when a separately specified relaxation Hamiltonian is optimized and decoding is applied only as a subsequent rounding or evaluation step.

The main contributions of this paper are fourfold. First, for an arbitrary fixed single-shot POVM decoder and objective function, we define a decoder-consistent Hamiltonian whose expectation value exactly represents the expected objective value after decoding, and establish its order-preserving uniqueness over the full quantum state space. Second, by decomposing the classical objective into Fourier components, we compare representative constructions, including direct encoding, simultaneous-decoding POVMs for QRAC blocks~\cite{fuller2024}, and the POVM induced by the magic rounding of Teramoto et al.~\cite{teramoto2023}, in terms of which components they preserve and at what rates. Third, we recover the known affine consistency of standard MaxCut magic-state rounding and identify the degree-dependent obstruction that generally prevents the same consistency for mixed-degree QUBO objectives under the simultaneous-decoding POVMs analyzed here; the precise sector conditions are stated in Section~\ref{sec:qrao-reinterpretation}. Fourth, we present a systematic construction of POVM decoder blocks from prescribed Fourier components, together with corresponding lower bounds on the expected post-decoding objective value under specified conditions.

The remainder of this paper is organized as follows. Section~\ref{sec:framework} fixes the notation and formulates the decoder-consistent Hamiltonian and the order-uniqueness result. Section~\ref{sec:parity-qrao} introduces pullbacks of Fourier components, clarifies their relation to existing QRAO-type relaxations, and derives performance bounds from test states. Section~\ref{sec:decoder-design} presents a systematic POVM construction based on a prescribed set of components to be preserved. Section~\ref{sec:examples-implications} illustrates both decoder design and the reinterpretation of existing methods using the \((2,1)\) direct parity block and the POVM induced by the magic rounding of Teramoto et al.. Section~\ref{sec:conclusion} concludes the paper. Detailed derivations and auxiliary results are collected in the Appendices.

\section{Framework for decoder-consistent Hamiltonians}
\label{sec:framework}

In this section, we define the decoder-consistent Hamiltonian, a central concept of this paper. We also establish its expectation-value identity, uniqueness with respect to state ordering, and spectral range within a unified framework.

\subsection{Classical objective functions and POVM decoders}

\subsubsection{Classical configuration space and objective function}

We mainly use the Ising representation, with the classical configuration space \(\X=\{\pm1\}^m\), and write \([m]:=\{1,\ldots,m\}\). For a classical configuration $s=(s_1,\ldots,s_m)$, let $f:\X\to\mathbb R$ be the objective function.
When discussing QUBO objectives, we use their equivalent Ising representation obtained through the standard transformation \(x_i=(1-s_i)/2\).

Let $C$ be the classical register storing the classical configuration, with classical Hilbert space $\Hcal_C$ spanned by the orthonormal basis $\{\ket{s}_C:s\in\X\}$. We define the diagonal operator corresponding to the classical objective function by $F_C=\sum_{s\in\X}f(s)\ket{s}_C\bra{s}$.
Throughout this paper, the term classical operator refers only to an operator that is diagonal in this computational basis.

For classical spins $s_i\in\{\pm1\}$, we write the parity monomial of $A\subseteq[m]$ as $s^A:=\prod_{i\in A}s_i$ and $s^\emptyset:=1$.
Any objective function admits the unique Fourier expansion
\[
f(s)=\sum_{A\subseteq[m]}\widehat f_A s^A,
\qquad
\widehat f_A=2^{-m}\sum_{s\in\X}f(s)s^A.
\]
We refer to each term \(\widehat f_A s^A\) as a Fourier component of \(f\).
Correspondingly, defining
\[
S_{C,A}:=\sum_{s\in\X}s^A\ket{s}_C\bra{s},
\]
we have
\[
F_C=\sum_{A\subseteq[m]}\widehat f_A S_{C,A}.
\]
We also define the Fourier support of $f$ by
\begin{align}
\operatorname{supp}_{\mathrm F}(f)
:=
\left\{
A\subseteq[m]:
\widehat f_A\neq0
\right\}.
\label{eq:fourier-support}
\end{align}
Thus, $A\in\operatorname{supp}_{\mathrm F}(f)$ if and only if the Fourier component indexed by $A$ appears in $f$ with a nonzero coefficient.

We use the Fourier expansion because it separates an objective into one-body, two-body, and higher-order interaction components. Once the decoder is fixed, each component is pulled back to a corresponding quantum operator. Different components may be retained with different strengths or eliminated entirely, allowing us to characterize the action of the decoder component by component.

\subsubsection{POVM decoder}

A fixed single-shot decoder that outputs a classical configuration $s\in\X$ from a quantum state $\rho$ is represented by a POVM $\{M_s\}_{s\in\X}$, where $M_s\ge0$ and $\sum_sM_s=I$. Any fixed classical post-processing of an underlying measurement can be absorbed into these effective POVM elements.
This decoder can be expressed as a quantum-to-classical (QC) channel $\Dcal(\rho)=\sum_{s\in\X}\Tr(M_s\rho)\ket{s}_C\bra{s}$.
The dual map $\Dcal^\dagger$ is defined by $\Tr(G_C\Dcal(\rho))=\Tr(\Dcal^\dagger(G_C)\rho)$. In particular, for a classical operator $G_C=\sum_{s\in\X}g(s)\ket{s}_C\bra{s}$, we have $\Dcal^\dagger(G_C)=\sum_{s\in\X}g(s)M_s$.

In particular, $\Dcal^\dagger$ is a positive unital map from the algebra of classical operators to operators on the quantum system. Hence, for any Hermitian classical operator $G_C$,
\begin{align}
  \|\Dcal^\dagger(G_C)\|_\infty
  \le
  \|G_C\|_\infty
\end{align}
holds (see Appendix~\ref{app:contractivity}).

\subsubsection{Block decoders}

Let \(V=[m]\) be the set of classical variables and partition it into mutually disjoint blocks, \(V=\bigsqcup_{B\in\mathcal B}B\). For each block \(B\), we associate a classical register \(C_B\), a classical Hilbert space \(\mathcal H_{C_B}\), a quantum subsystem \(Q_B\), and a local POVM \(\{M_{s_B}^{(B)}\}_{s_B\in\{\pm1\}^B}\). The corresponding local decoder is denoted by \(\mathcal D_B\). For the block-structured decoders considered below, the overall decoder is constructed as
$\Dcal=\bigotimes_{B\in\Bcal}\Dcal_B$.

For any $A\subseteq V$, letting $A_B:=A\cap B$, the classical parity operator factorizes as \begin{align}
  S_{C,A}
  =
  \bigotimes_{B\in\Bcal}S_{C_B,A_B}.
  \label{eq:classical-block-factorization}
\end{align}
Since the decoder itself is a tensor product of the local decoders, its dual map factorizes accordingly: \begin{align}
  \Dcal^\dagger(S_{C,A})
  =
  \bigotimes_{B\in\Bcal}
  \Dcal_B^\dagger(S_{C_B,A_B}).
  \label{eq:decoder-block-factorization}
\end{align}
This factorization underlies both block-structured QRAO and the Fourier-component-preserving decoder designs developed below.

\subsection{Decoder-consistent Hamiltonian}
\label{sec:decoder-consistency}

We now give the main definition and theorem. Once a POVM decoder and a classical objective function are fixed, the Hermitian operator whose expectation value exactly reproduces the expected post-decoding objective value is uniquely given by the pullback of the classical objective operator under the decoder's dual map. We further show that, if only the ordering of quantum states is required to agree, this operator is unique up to a positive affine transformation.

\subsubsection{Definition and expectation-value identity}

\begin{definition}[Decoder-consistent Hamiltonian] 
\label{def:decoder-consistent-hamiltonian}
For a POVM decoder $\Dcal$ and an objective function $f$,
\begin{align}
H_{\Dcal}[f]:=\Dcal^\dagger(F_C)=\sum_{s\in\X}f(s)M_s
\label{eq:decoder-consistent-hamiltonian}
\end{align}
is called the decoder-consistent Hamiltonian.
\end{definition}
For any quantum state \(\rho\), the decoder-consistent Hamiltonian satisfies
\begin{align}
  \Tr(H_{\Dcal}[f]\rho)
  %&=\Tr\!\left(\Dcal^\dagger(F_C)\rho\right)
  %\nonumber\\
  &=\sum_{s\in\X}f(s)\Tr(M_s\rho)
  \\
  &=\E_{s\sim\Dcal(\rho)}[f(s)].
  \label{eq:decoded-expectation-derivation}
\end{align}
This identity is a direct consequence of the standard duality between Schrödinger- and Heisenberg-picture descriptions of a QC channel. Here we use it as a criterion for determining whether the operator used as the quantum optimization objective in a QRAO-type relaxation is consistent with the final decoder.
% \begin{newenglish}
% For a general POVM decoder, following the convention in quantum optimization, we refer to the Hermitian operator in Eq.~\eqref{eq:decoder-consistent-hamiltonian} as the Hamiltonian to be optimized.
% \end{newenglish}
%By definition, for any quantum state $\rho$, $\Tr(H_{\Dcal}[f]\rho)$ equals the post-decoding expected objective value. 
Therefore, if the goal is to directly maximize the post-decoding expected objective value, the natural operator for quantum optimization is $H_{\Dcal}[f]$.
In particular, if an eigenstate corresponding to the maximum eigenvalue of $H_{\Dcal}[f]$ can be prepared exactly, it maximizes the expected post-decoding objective value under the fixed decoder $\Dcal$.

\subsubsection{Order-preserving uniqueness}

The following theorem shows that the decoder-consistent Hamiltonian is essentially unique in the sense of preserving the ordering over the entire state space.

\begin{theorem}[Order-preserving uniqueness]
\label{thm:order-uniqueness}
Suppose $G:=H_{\Dcal}[f]$ is not proportional to the identity. For any Hermitian operator $H$, the following two statements are equivalent: 
\begin{enumerate}
  \item for all quantum states $\rho_1,\rho_2$, $\Tr(H\rho_1)>\Tr(H\rho_2)$ if and only if $\Tr(G\rho_1)>\Tr(G\rho_2)$;
  \item there exist $a>0$ and $b\in\mathbb R$ such that $H=aG+bI$.    
\end{enumerate}
\end{theorem}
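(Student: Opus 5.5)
The direction (2)$\Rightarrow$(1) is immediate. If $H=aG+bI$ with $a>0$, then $\Tr(H\rho)=a\Tr(G\rho)+b$ for every state, since $\Tr\rho=1$. A strictly increasing affine map preserves strict inequalities in both directions. The substance is therefore (1)$\Rightarrow$(2), which I will prove by a linear-algebra argument on the real vector space of Hermitian operators.

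\textbf{Step 1: equal level sets.} From (1) and its symmetric negation, $\Tr(H\rho_1)=\Tr(H\rho_2)$ if and only if $\Tr(G\rho_1)=\Tr(G\rho_2)$. Let $d$ be the Hilbert-space dimension. For any traceless Hermitian $X$ with $\Tr(GX)=0$, write $X=t(\rho_1-\rho_2)$ with $t\ge 0$ and states $\rho_1,\rho_2$. This is possible by taking the Jordan decomposition $X=X_+-X_-$, where $\Tr X_+=\Tr X_-=:t$ when $X\neq 0$. Then $\Tr(G\rho_1)=\Tr(G\rho_2)$, hence $\Tr(H\rho_1)=\Tr(H\rho_2)$, and so $\Tr(HX)=0$.

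\textbf{Step 2: $H$ lies in the span of $G$ and $I$.} Step 1 says that the linear functional $X\mapsto\Tr(HX)$ vanishes on the subspace
\[
W:=\{X:\Tr(IX)=0,\ \Tr(GX)=0\}.
\]
Under the Hilbert--Schmidt inner product, $W^\perp=\mathrm{span}_{\mathbb R}\{I,G\}$. Hence $H=aG+bI$ for some $a,b\in\mathbb R$.

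\textbf{Step 3: $a>0$.} This is where the hypothesis that $G$ is not proportional to the identity is used. Because $G\not\propto I$, it has two distinct eigenvalues $\lambda_1>\lambda_2$. Let $\rho_1,\rho_2$ be the projectors onto corresponding eigenvectors, so $\Tr(G\rho_1)>\Tr(G\rho_2)$. By (1), $\Tr(H\rho_1)>\Tr(H\rho_2)$, that is, $a(\lambda_1-\lambda_2)>0$, which forces $a>0$. This also rules out $a=0$.

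\textbf{Main obstacle.} The only delicate point is the decomposition in Step 1: every traceless Hermitian $X$ must be a nonnegative multiple of a difference of density operators. Writing $X=X_+-X_-$, the condition $\Tr X=0$ gives $\Tr X_+=\Tr X_-$. Setting $\rho_{1,2}=X_\pm/t$ does the job when $t>0$, and the case $X=0$ is trivial. Everything else is the standard annihilator argument in finite dimensions.
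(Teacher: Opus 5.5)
Your proof is correct and takes essentially the same approach as the paper's. Both rewrite condition 1 on traceless Hermitian differences via the Jordan decomposition $X=X_+-X_-$ and then finish with a kernel/annihilator argument; the differences are cosmetic: you annihilate $\{I,G\}^\perp$ directly in the full Hermitian space using only the equality consequence of condition 1 and fix the sign of $a$ with two eigenstates of $G$, whereas the paper shows that $L_H$ and $L_G$ have equal kernels on the traceless subspace, deduces proportionality there, and then uses that the orthogonal complement of that subspace is $\operatorname{span}_{\mathbb R}\{I\}$.
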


The proof is given in Appendix~\ref{app:order-uniqueness-proof}. Consequently, if a Hermitian operator \(H\) is not a positive affine transformation of \(H_{\Dcal}[f]\), there exists a pair of states whose ordering under \(H\) does not agree with the ordering induced by the expected post-decoding objective value. Thus, the only objective operators that exactly reproduce the ordering induced by the expected post-decoding objective value over the full state space are the positive affine transformations of \(H_{\Dcal}[f]\). This uniqueness statement concerns the full set of density operators; agreement after restricting to a particular ansatz family is a separate question.

We finally note a basic property of the decoder-consistent Hamiltonian: its spectrum is confined to the range of the classical objective function.

\begin{proposition}[Spectral range]
\label{prop:spectral-range}
Let $f_{\min}:=\min_s f(s)$ and $f_{\max}:=\max_s f(s)$. For any POVM decoder, 
\begin{align}
  f_{\min}\Id
  \le
  H_{\Dcal}[f]
  \le
  f_{\max}\Id.
  \label{eq:spectral-range}
\end{align}
In particular, $\lambda_{\max}(H_{\Dcal}[f])\le f_{\max}$.
\end{proposition}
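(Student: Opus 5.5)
The plan is to compare $H_{\Dcal}[f]-f_{\min}\Id$ and $f_{\max}\Id-H_{\Dcal}[f]$ directly with POVM effects, using only positivity and completeness of $\{M_s\}$. Concretely, I would rewrite the identity using completeness, $\sum_s M_s=\Id$:
\begin{align}
H_{\Dcal}[f]-f_{\min}\Id=\sum_{s\in\X}\bigl(f(s)-f_{\min}\bigr)M_s .
\end{align}
Each coefficient $f(s)-f_{\min}$ is nonnegative and each $M_s\ge0$. A nonnegative combination of positive semidefinite operators is positive semidefinite, so $H_{\Dcal}[f]\ge f_{\min}\Id$. Symmetrically,
\begin{align}
f_{\max}\Id-H_{\Dcal}[f]=\sum_{s\in\X}\bigl(f_{\max}-f(s)\bigr)M_s\ge0 .
\end{align}
This gives the upper bound.

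For the eigenvalue statement, I would take a normalized eigenvector $\ket{\psi}$ of $H_{\Dcal}[f]$ for $\lambda_{\max}$. Then $\lambda_{\max}=\bra{\psi}H_{\Dcal}[f]\ket{\psi}\le f_{\max}$ follows from the operator inequality. Equivalently, the expectation identity gives $\Tr(H_{\Dcal}[f]\rho)=\E_{s\sim\Dcal(\rho)}[f(s)]\in[f_{\min},f_{\max}]$ for every state, since $\{\Tr(M_s\rho)\}$ is a probability distribution. That quantity is a convex average of values of $f$, which yields the same bounds by the variational characterization of extreme eigenvalues.

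There is no real obstacle here. The only point requiring care is that the argument uses both defining properties of a POVM: positivity of the $M_s$ for the sign of each term, and completeness to absorb the identity. Alternatively, one could invoke the contractivity bound $\|\Dcal^\dagger(G_C)\|_\infty\le\|G_C\|_\infty$ after shifting $f$ by $(f_{\max}+f_{\min})/2$. I prefer the direct argument because it gives both one-sided bounds at once.
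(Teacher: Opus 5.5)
Your proof is correct and is essentially the paper's own argument: you use completeness to write $H_{\Dcal}[f]-f_{\min}\Id=\sum_s(f(s)-f_{\min})M_s$ and $f_{\max}\Id-H_{\Dcal}[f]=\sum_s(f_{\max}-f(s))M_s$, each a nonnegative combination of positive effects. The eigenvalue bound then follows directly from the operator inequality.
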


\begin{proof}
Since $f(s)-f_{\min}\ge0$ and $M_s\ge0$ for every $s$, we have $H_{\Dcal}[f]-f_{\min}I=\sum_s(f(s)-f_{\min})M_s\ge0$. Similarly, $f_{\max}I-H_{\Dcal}[f]=\sum_s(f_{\max}-f(s))M_s\ge0$.
\end{proof}

This proposition shows that the largest eigenvalue of the decoder-consistent Hamiltonian cannot exceed the classical optimum \(f_{\max}\). By decoder consistency, the expected objective value obtained by decoding a maximum-eigenvalue state is exactly \(\lambda_{\max}(H_{\Dcal}[f])\). Therefore, a lower bound on this eigenvalue that is close to \(f_{\max}\) directly guarantees that the expected post-decoding objective value is close to the classical optimum. Moreover, when the expectation value of \(H_{\Dcal}[f]\) is close to \(f_{\max}\), one can also obtain a probabilistic guarantee that the decoded distribution concentrates on optimal or near-optimal solutions; see Appendix~\ref{app:eigenvalue-closeness}.

\section{Pullback of Fourier components and existing QRAO-type relaxations}
\label{sec:parity-qrao}

We decompose the decoder-consistent Hamiltonian into the Fourier components of the classical objective function and use this language to reinterpret existing QRAO-type relaxations. The pullback formulation makes it possible to quantify a decoder by which Fourier components it preserves on the quantum side and to what extent.

\subsection{Pullback identity}

The decoder-consistent Hamiltonian is decomposed as 
\begin{align}
  H_{\Dcal}[f]
  &=
  \sum_{A\subseteq[m]}
  \widehat f_A\Dcal^\dagger(S_{C,A}).
  \label{eq:parity-pullback-identity}
\end{align}
We write 
\begin{align}
  O_A^{\Dcal}
  :=
  \Dcal^\dagger(S_{C,A})
  =
  \sum_{s\in\X}s^A M_s.
  \label{eq:parity-pullback-observable}
\end{align}
This identity allows us to track, component by component, the quantum operator to which each Fourier component is mapped by the decoder. Even for the same classical objective, different decoders generally give different $O_A^{\Dcal}$ and therefore different quantum objective Hamiltonians.

\subsection{Preservation rate}

For a Fourier component \(A\) with \(O_A^\Dcal\neq0\), we define
\begin{align}
  \kappa_A
  :=
  \|O_A^{\Dcal}\|_\infty,
  \qquad
  \widetilde O_A
  :=
  \frac{O_A^{\Dcal}}{\|O_A^{\Dcal}\|_\infty}.
  \label{eq:kappa-definition}
\end{align}
Thus,
\[
O_A^\Dcal=\kappa_A\widetilde O_A,
\qquad
\|\widetilde O_A\|_\infty=1.
\]
We refer to \(\kappa_A\) as the preservation rate of Fourier component \(A\), in the sense that it quantifies the operator-norm magnitude of its pullback under the decoder. The term is used only in this operator-norm sense; by itself, $\kappa_A$ does not imply independent recoverability of distinct Fourier components.
If \(O_A^\Dcal=0\), we set \(\kappa_A=0\) and may choose \(\widetilde O_A\) to be any fixed unit-norm Hermitian operator, since the product \(\kappa_A\widetilde O_A\) is always zero. For \(A=\emptyset\), \(O_\emptyset^\Dcal=I\), so we set \(\kappa_\emptyset=1\) and \(\widetilde O_\emptyset=I\).
Since the dual map of a POVM decoder is positive and unital, $0\le\kappa_A\le1$.

With this notation,
\begin{align}
  H_{\Dcal}[f]
  =
  \sum_{A\subseteq[m]}
  \widehat f_A\kappa_A\widetilde O_A.
  \label{eq:hamiltonian-with-kappa}
\end{align} 
In particular, a Fourier component with $\kappa_A=0$ makes no contribution to the post-decoding expectation for any quantum state.

\subsection{Reinterpretation of existing QRAO-type relaxations}
\label{sec:qrao-reinterpretation}

Using the language of decoder consistency and Fourier-component pullbacks, we reinterpret selected existing compression-based quantum relaxations in a unified manner. Our focus is not to rederive every detail of these methods, but to analyze the specific fixed decoders stated below and identify which Fourier components are pulled back by those decoders and with what preservation rates. The main examples are summarized in Table~\ref{tab:decoder-comparison}.

The present subsection concerns the structure of the decoder-induced pullbacks and their relation to the relaxation Hamiltonians. Performance bounds require the additional test-state signal rates \(\mu_A\) and the effective preservation profile \(\alpha_A=\kappa_A\mu_A\); these guarantees are developed afterward in Sec.~\ref{sec:test-state-performance}.

\begin{table*}[t]
\caption{
Representative decoders and existing relaxations in the framework of this paper.
Here, $\kappa_A$ denotes the preservation rate appearing in the pullback of a nonconstant Fourier component $A$ under the dual map of the decoder.
}
\label{tab:decoder-comparison}
\begin{ruledtabular}
\begin{tabular}{
p{0.15\textwidth}
p{0.25\textwidth}
p{0.17\textwidth}
p{0.15\textwidth}
p{0.20\textwidth}
}
Decoder / method
&
Main preserved components
&
Preservation rate $\kappa_A$
&
Decoder-consistent Hamiltonian
&
Interpretation / relation to conventional relaxation
\\
$(3,1)$-QRAC-type simultaneous-decoding POVM
&
Intra-block one-body components and inter-block two-body components
&
One-body: $1/\sqrt3$; inter-block two-body: $1/3$
&
Eq.~\eqref{eq:31-decoder-consistent-qubo}
&
Known positive affine equivalence for MaxCut; generally absent for mixed-degree QUBO under this decoder
\\
$(n,n-1)$-QRAC-type simultaneous-decoding POVM
&
Intra-block one-body components and inter-block two-body components
&
One-body: $\sqrt{\nu}$; inter-block two-body: $\nu$
&
Eq.~\eqref{eq:nnminus1-decoder-consistent-qubo}
&
Positive affine equivalence for the associated MaxCut construction; generally absent for mixed-degree QUBO under this decoder
\\
$(2,1)$ direct parity block
&
Prescribed local one- and two-body components
&
$s_1$: $a$; $s_2$: $b$; $s_1s_2$: $c$
&
Eq.~\eqref{eq:21-direct-parity-hamiltonian}
&
Depends on the decoder design
\\
POVM induced by Teramoto et al.'s magic rounding
&
$s_1$, $s_2$, and $s_1s_2$
&
Each component: $2/(3\sqrt3)$
&
Eq.~\eqref{eq:teramoto-decoder-consistent-hamiltonian}
&
Determined by the POVM induced by the measurement and classical post-processing
\end{tabular}
\end{ruledtabular}
\end{table*}

We briefly comment on direct encoding without compression, which provides a trivial limiting case of the present framework. If computational-basis measurement is used as the decoder, each Fourier component is pulled back directly as \(\Dcal^\dagger(S_{C,A})=\prod_{i\in A}Z_i\), so every component has preservation rate \(\kappa_A=1\). This provides a trivial no-compression example of the decoder-consistency framework.

\subsubsection{\texorpdfstring{$(3,1)$-QRAC simultaneous-decoding POVM}{(3,1)-QRAC simultaneous-decoding POVM}}

Consider a \((3,1)\)-QRAC-type block in which three classical variables are assigned to a single qubit~\cite{fuller2024}. For the three local positions \(a\in[3]\), let $\Gamma_1=X$, $\Gamma_2=Y$, and $\Gamma_3=Z$.
We define the simultaneous-decoding POVM as 
\begin{align}
  M_{s_B}^{(3,1)}
  =
  \frac18
  \left(
    \Id+\frac1{\sqrt3}(s_1X+s_2Y+s_3Z)
  \right)
  \label{eq:31-simultaneous-povm}
\end{align}
with $s_B\in\{\pm1\}^3$.
This simultaneous-decoding POVM outputs estimates of all three classical variables in the block and is the effective POVM induced by uniform magic-state rounding for the \((3,1)\)-QRAC~\cite{fuller2024}.

For each \(a\in[3]\), the one-body component is pulled back as
\[
\Dcal_B^\dagger(S_{C_B,\{a\}})=\frac{1}{\sqrt3}\Gamma_a,
\]
and hence has preservation rate \(1/\sqrt3\).

On the other hand, two-body components within the same block vanish. For example,
\[
\Dcal_B^\dagger(S_{C_B,\{1,2\}})=0.
\]
The sum vanishes because each term is odd in at least one independently summed sign. The other two-body components and the three-body component vanish similarly.
Under the graph-coloring constraint, the two variables associated with each interaction edge belong to different blocks. For a global variable \(i\), let \(a(i)\in[3]\) denote its local position within block \(B(i)\), and let \(P_i\) denote the corresponding Pauli operator acting on that block, i.e.,
\[P_i:=\Gamma_{a(i)}^{(B(i))}.\]
The pullback therefore factorizes as a tensor product, yielding
\[
\Dcal^\dagger(S_{C,\{i,j\}})
=\frac13 P_iP_j,
\qquad B(i)\neq B(j).
\]
Thus, inter-block two-body components are preserved with rate \(1/3\).

For a quadratic objective 
\begin{align}
  f(s)=c+\sum_i h_is_i+\sum_{(i,j)\in E}J_{ij}s_is_j
  \label{eq:general-qubo}
\end{align}
satisfying the coloring constraint, the decoder-consistent Hamiltonian is
\begin{align}
  H_{\Dcal}^{(3,1)}[f]
  =
  c\Id
  +
  \frac1{\sqrt3}\sum_i h_iP_i
  +
  \frac13\sum_{(i,j)\in E}J_{ij}P_iP_j.
  \label{eq:31-decoder-consistent-qubo}
\end{align}

\subsubsection{\texorpdfstring{Extension to the $(n,n-1)$-QRAC simultaneous-decoding POVM}{Extension to the (n,n-1)-QRAC simultaneous-decoding POVM}}

The \((n,n-1)\)-QRAC-type construction encodes \(n\) classical bits into \(n-1\) qubits. In low dimensions, the \((2,1)\)-QRAC is a standard example of a one-qubit QRAC~\cite{ambainis2002,imamichi2018}, while a \((3,2)\)-QRAC was identified through numerical search~\cite{imamichi2018}. Related constructions have also appeared in the QRAO literature~\cite{teramoto2023}. The explicit construction and coefficient relations used here follow Ref.~\cite{suzuki2026nnminus1qrac}; see also Ref.~\cite{kondo2026rac} for an alternative formulation. Detailed code states and the derivation of the blockwise simultaneous-decoding POVM are provided in Appendix~\ref{app:nnminus1-qrao}. In the main text, we use only the relations needed under the coloring constraint, where interacting variables lie in different blocks.

Let \(\nu:=(n-1)/n\), and let \(O_i\) denote the unit-norm Hermitian operator associated with variable \(i\) in the \((n,n-1)\)-QRAC construction. Then, under simultaneous decoding, the one-body component has preservation rate $\sqrt\nu$, while a two-body component connecting different blocks has preservation rate $\nu$. Specifically,
\begin{align}
  \Dcal^\dagger(S_{C,\{i\}})&=\sqrt\nu O_i,
  \\
  \Dcal^\dagger(S_{C,\{i,j\}})&=\nu O_iO_j,
  \qquad B(i)\neq B(j).
  \label{eq:nnminus1-basic-pullback}
\end{align}
Under the coloring constraint, the decoder-consistent Hamiltonian is 
\begin{align}
  H_{\Dcal}^{(n,n-1)}[f]
  =
  c\Id
  +
  \sqrt\nu\sum_i h_iO_i
  +
  \nu\sum_{(i,j)\in E}J_{ij}O_iO_j.
  \label{eq:nnminus1-decoder-consistent-qubo}
\end{align}

\subsubsection{Affine equivalence for MaxCut}

For an unweighted graph $G=(V,E)$, we write the MaxCut objective as
\begin{align}
  \Cut(s)
  =
  \frac{|E|}{2}
  -
  \frac12\sum_{(i,j)\in E}s_is_j.
  \label{eq:maxcut-objective}
\end{align}
For the \((3,1)\)-QRAC simultaneous-decoding POVM under the coloring constraint, all nonconstant MaxCut terms have the same preservation rate. The decoder-consistent Hamiltonian is therefore
\begin{align}
  H_{\Dcal}^{(3,1)}[\Cut]
  =
  \frac{|E|}{2}\Id
  -
  \frac16\sum_{(i,j)\in E}P_iP_j.
  \label{eq:31-decoder-consistent-maxcut}
\end{align}
If the conventional relaxation Hamiltonian, which compensates for the
attenuation on the code states, is written as
\begin{align}
H_{\mathrm{relax}}^{(3,1)}
&:=
\frac12\sum_{(i,j)\in E}
\left(I-3P_iP_j\right)
\nonumber\\
&=
\frac{|E|}{2}I
-\frac32\sum_{(i,j)\in E}P_iP_j,
\label{eq:31-relax-maxcut}
\end{align}
then
\begin{align}
H_{\Dcal}^{(3,1)}[\Cut]-\frac{|E|}{2}I
=
\frac19
\left(
H_{\mathrm{relax}}^{(3,1)}
-\frac{|E|}{2}I
\right).
\label{eq:31-maxcut-affine}
\end{align}

Similarly, for $(n,n-1)$-QRAC we write the natural code-state-compensated MaxCut relaxation associated with this encoding as 
\begin{align}
  H_{\mathrm{relax}}^{(n,n-1)}
  &:=
  \frac12\sum_{(i,j)\in E}
  \left(
    \Id-\nu^{-1}O_iO_j
  \right)
  \nonumber\\
  &=
  \frac{|E|}{2}\Id
  -
  \frac{1}{2\nu}\sum_{(i,j)\in E}O_iO_j
  \label{eq:nnminus1-relax-maxcut}
\end{align}
This comparison Hamiltonian compensates for the attenuation of the two-body code-state expectation by the factor $1/\nu$.
In the $(n,n-1)$ case specifically, 
\begin{align}
  H_{\Dcal}^{(n,n-1)}[\Cut]-\frac{|E|}{2}\Id
  =
  \nu^2
  \left(
    H_{\mathrm{relax}}^{(n,n-1)}-\frac{|E|}{2}\Id
  \right)
  \label{eq:nnminus1-maxcut-affine}
\end{align}
Thus, within these constructions, a MaxCut objective whose nonconstant terms all acquire the same preservation factor yields a positive affine relation between the code-state-compensated relaxation and the decoder-consistent Hamiltonian. By Theorem~\ref{thm:order-uniqueness}, the two operators then induce the same ordering of states. For the standard $(3,1)$ MaxCut magic-state rounding, this recovers the result of Fuller et al. that the rounded average cut is linearly related to the relaxed energy and is monotone in that energy~\cite{fuller2024}; Eq.~\eqref{eq:31-maxcut-affine} expresses that known relation directly as a positive affine equivalence with the POVM pullback.

The RQRAO unification theorem of Ref.~\cite{kondo2025} establishes a positive affine relation between the expectation value of the standard QRAO/RQRAO Hamiltonian and the expected cut value after decoding. It can therefore be interpreted as a concrete instance of Hamiltonian--decoder consistency for the expected MaxCut value.

More explicitly, for the random magic measurement of Ref.~\cite{kondo2025}, let $P_m(b;\rho)$ be the output distribution and $\Cut(b)$ the corresponding MaxCut value. The Hamiltonian \(H_m\) used in the standard QRAO/RQRAO formulation satisfies 
\begin{align}
  \Tr(H_m\rho)
  =
  a\,\E_{b\sim P_m(\cdot;\rho)}[\Cut(b)]
  +
  b_0
\end{align}
for some $a>0$ and $b_0\in\mathbb R$.
Viewing the random magic measurement as a single POVM decoder $\Dcal_m^{\mathrm{magic}}$, decoder consistency gives
\begin{align}
  \E_{b\sim P_m(\cdot;\rho)}[\Cut(b)]
  =
  \Tr\!\left(H_{\Dcal_m^{\mathrm{magic}}}[\Cut]\rho\right).
\end{align}
Since the relation above holds for arbitrary \(\rho\), $H_m=aH_{\Dcal_m^{\mathrm{magic}}}[\Cut]+b_0I$. Thus, the Hamiltonian appearing in the unification theorem is a positive affine transformation of the decoder-consistent Hamiltonian associated with the random magic measurement. Because \(a>0\), the two Hamiltonians induce the same ordering of states, and optimizing the expectation value of \(H_m\) is equivalent to optimizing the expected decoded cut value.

\subsubsection{Non-affine relation for general QUBO objectives}

For quadratic objective functions that contain both one-body and two-body terms, the relation depends on the decoder and on the scaling used in the relaxation. In the $(3,1)$-QRAC case, the decoder-consistent Hamiltonian for the simultaneous-decoding POVM above is given by Eq.~\eqref{eq:31-decoder-consistent-qubo}.

As a concrete comparison object, consider the degree-dependent code-state-compensated Hamiltonian
\begin{align}
H_{\mathrm{relax}}^{(3,1)}[f]
=
 cI+\sqrt3\sum_i h_iP_i
 +3\sum_{(i,j)\in E}J_{ij}P_iP_j.
\label{eq:31-relax-qubo}
\end{align}
This scaling extends the MaxCut coefficient compensation of Ref.~\cite{fuller2024} and is used in general-Ising QRAO formulations and in the Qiskit reference encoding~\cite{matsuyama2024internal,qiskitOptimizationQRAOEncoding2023}. Related applications to constrained and QUBO formulations are discussed in Refs.~\cite{sharma2024knapsack,sharma2025comparative}.

Let
\[
A_1:=\sum_i h_iP_i,
\qquad
A_2:=\sum_{(i,j)\in E}J_{ij}P_iP_j.
\]
Under the coloring constraint, $A_1$ and $A_2$ lie in distinct Pauli-weight sectors and are Hilbert--Schmidt orthogonal. The centered nonconstant parts are
\begin{align}
H_{\Dcal}^{(3,1)}[f]-cI
&=
\frac1{\sqrt3}A_1+\frac13 A_2,
\\
H_{\mathrm{relax}}^{(3,1)}[f]-cI
&=
\sqrt3 A_1+3A_2.
\end{align}
If $A_1\neq0$ and $A_2\neq0$, a positive affine relation would require the same scale factor to equal both $1/3$ from the one-body sector and $1/9$ from the two-body sector, which is impossible. In the $(n,n-1)$ construction, the analogous code-state-compensated comparison has one-body and inter-block two-body coefficients $\nu^{-1/2}$ and $\nu^{-1}$, whereas the simultaneous-decoding POVM gives $\sqrt\nu$ and $\nu$. The corresponding ratios are therefore $\nu$ and $\nu^2$, and the same conclusion follows whenever the aggregate one-body and inter-block two-body operator sectors are linearly independent and both nonzero.

Consequently, under these stated decoder and sector conditions, the degree-scaled relaxation and the decoder-consistent Hamiltonian are not related by a positive affine transformation and can induce different orderings over the full state space. This mismatch does not imply that the degree-scaled relaxation is invalid as a code-state-reproducing, lower-bounding, or heuristic surrogate objective. It means only that its expectation value cannot, in general, be identified with the expected decoded objective value for the specific simultaneous-decoding POVM considered here. The Qiskit reference encoding implements the degree-dependent scaling in Eq.~\eqref{eq:31-relax-qubo}, but the encoding prescription by itself does not fix a rounding scheme~\cite{qiskitOptimizationQRAOEncoding2023}. Accordingly, the present comparison does not assume that every implementation uses the simultaneous-decoding POVM above. When numerical results are interpreted specifically as improvements in the expected post-decoding objective value, the Hamiltonian--decoder relation must therefore be checked for the decoder actually used.

\subsection{Performance bounds from test states}
\label{sec:test-state-performance}

The preservation rate \(\kappa_A\) quantifies how much of each Fourier component is retained on the quantum side by the dual map of the decoder. To lower-bound the maximum eigenvalue of \(H_{\Dcal}[f]\) via the variational principle, however, we must also quantify the expectation values of the corresponding normalized quantum operators on a suitable family of comparison states.
The test states used here are not states that the algorithm is required to prepare; they are comparison states used only as witnesses in the variational lower bound on $\lambda_{\max}(H_{\Dcal}[f])$.

Consider a family of test states \(\{\tau_s\}_{s\in\X}\) such that, for each nonzero Fourier component $A$ of interest,
\[
\Tr(\widetilde O_A\tau_s)=\mu_A s^A,
\qquad \mu_A\in[0,1].
\]
We call \(\mu_A\) the signal rate of component \(A\).
We define $\alpha_A:=\kappa_A\mu_A$ as the effective preservation rate of component $A$. Then $\Tr(\Dcal^\dagger(S_{C,A})\tau_s)=\alpha_A s^A$.

The following result provides a lower bound on \(\lambda_{\max}(H_{\Dcal}[f])\) in terms of the effective preservation rates, assuming that such test states exist. Their existence is not guaranteed for an arbitrary decoder, but they can be constructed explicitly for the representative blocks considered in this paper. A precise definition of the test states and their explicit constructions are given in Appendix~\ref{app:test-states}.

For the objective $f$ considered below, we require this relation for all nonconstant $A\in\operatorname{supp}_{\mathrm{F}}(f)$ with $O_A^\Dcal\neq0$. Write the objective as
\[
f(s)=C+\sum_{A\neq\emptyset}\widehat f_A s^A,
\qquad
C:=\widehat f_{\emptyset}.
\]
For any classical configuration $s$, the variational principle gives 
\begin{align}
  \lambda_{\max}(H_{\Dcal}[f])
  &\ge
  \Tr\!\left(H_{\Dcal}[f]\tau_{s}\right)
  \nonumber\\
  &=
  C+
  \sum_{A\neq\emptyset}\widehat f_A\alpha_A(s)^A.
  \label{eq:general-test-state-bound}
\end{align}
Taking $s=s^*$ to be a classical optimal solution yields a lower bound on \(\lambda_{\max}(H_{\Dcal}[f])\).
In particular, if every nonconstant component appearing in the objective
has the same effective preservation rate $\alpha_A=\alpha$, then
\begin{align}
\lambda_{\max}(H_{\Dcal}[f])
\ge
C+\alpha(f_{\max}-C).
\label{eq:uniform-alpha-bound}
\end{align}
Thus, when the effective preservation rate is uniformly \(\alpha\), the guaranteed gain above the baseline \(C\) is reduced to an \(\alpha\) fraction of the classical optimal gain \(f_{\max}-C\).

For the MaxCut objective in Eq.~\eqref{eq:maxcut-objective}, the constant term is $C=|E|/2$ and all nonconstant components are two-body spin monomials. When all edges have the same effective preservation rate, as in standard coloring-based QRAO constructions~\cite{fuller2024}, the general bound takes a particularly simple form.
Therefore, if all edges have the same effective preservation rate $\alpha$ and $\rho_{\max}$ is a maximum-eigenvalue state of $H_{\Dcal}[\Cut]$, decoder consistency gives
\begin{align}
  \E_{s\sim\Dcal(\rho_{\max})}[\Cut(s)]
  &=
  \Tr(H_{\Dcal}[\Cut]\rho_{\max})
  \nonumber\\
  %&=
  %\lambda_{\max}(H_{\Dcal}[\Cut])
  %\nonumber\\
  &\ge
  \frac{|E|}{2}
  +
  \alpha\left(\Cut_{\max}-\frac{|E|}{2}\right).
  \label{eq:maxcut-alpha-bound-decoded}
\end{align}

When different components have different effective preservation rates, a component-dependent guarantee is required. Appendix~\ref{app:maxcut-two-rate-bound} derives a more detailed MaxCut lower bound for the case in which intra-block and inter-block edges have different effective preservation rates.

\section{General construction of Fourier-component-preserving decoders}
\label{sec:decoder-design}

Up to this point, we have used the pullback to analyze a given decoder. We now reverse this perspective: starting from the Fourier components that we wish to preserve, we construct a decoder whose pullback realizes them with prescribed strengths. Concretely, we use the preservation rates \(\kappa_A\) as design variables for POVM decoder blocks. Starting from the Fourier decomposition of the objective function, we choose in each block a set of local components to be preserved in the expected decoded objective and assign them to Hermitian operators on the quantum side. We then choose their preservation rates subject to positivity of the POVM elements. We refer to the resulting trade-off between the number of preserved components and their preservation rates as the positivity budget.

In practical implementations, the Hilbert-space dimension required to accommodate the assigned operator family and the cost of implementing the measurement impose additional constraints. Here, we focus first on the fundamental trade-off arising from POVM positivity.

\subsection{Block-local Fourier-component decoder}

Let $\Kcal_B\subseteq 2^B\setminus\{\emptyset\}$ be the set of nontrivial Fourier components we wish to preserve in block $B$.
%For example, choosing $\Kcal_B=\{\{1\},\{2\},\{1,2\}\}$ for $B=\{1,2\}$ means preserving the one-body components $s_1,s_2$ and the two-body component $s_1s_2$.
For each \(A\in\Kcal_B\), we assign a Hermitian operator \(\Gamma_A^{(B)}\) on \(Q_B\), normalized as
\begin{align}
  \|\Gamma_A^{(B)}\|_\infty=1.
  \label{eq:gamma-normalization}
\end{align}
For preservation rates $\kappa_A^{(B)}>0$, consider the following candidate POVM elements, indexed by $s_B\in\{\pm1\}^B$:
\begin{align}
  M_{s_B}^{(B)}
  =
  \frac1{2^{|B|}}
  \left(
    \Id_{Q_B}
    +
    \sum_{A\in\Kcal_B}
    \kappa_A^{(B)}s^A\Gamma_A^{(B)}
  \right).
  \label{eq:parity-preserving-block-povm}
\end{align}
Because every \(A\in\Kcal_B\) is nonempty, the nonconstant parity terms vanish when summed over all outcomes, and hence \(\sum_{s_B}M_{s_B}^{(B)}=I_{Q_B}\). It therefore remains only to ensure positivity.

A sufficient condition for positivity is \begin{align}
  \left\|
    \sum_{A\in\Kcal_B}
    \kappa_A^{(B)}s^A\Gamma_A^{(B)}
  \right\|_\infty
  \le1,\qquad \forall s_B\in\{\pm1\}^B. 
  \label{eq:positivity-budget-general}
\end{align}
In particular, if $\Gamma_A^{(B)}$ are mutually anticommuting Pauli operators, the sufficient condition simplifies to
\begin{align}
  \sum_{A\in\Kcal_B}\left(\kappa_A^{(B)}\right)^2\le1.
  \label{eq:l2-positivity-budget}
\end{align}
A derivation is given in Appendix~\ref{app:positivity-and-pullback}.
For mutually anticommuting Pauli assignments, Eq.~\eqref{eq:l2-positivity-budget} quantifies the trade-off between the number \(K_B:=|\Kcal_B|\) of preserved components and their preservation rates. The available Hilbert-space dimension further constrains how large such an anticommuting operator family can be. When $Q_B$ is a $q_B$-qubit register, $\dim\Hcal_{Q_B}=2^{q_B}$ and the compression ratio is $|B|/q_B$. Within this construction, preserving more components while keeping \(q_B\) small requires them to share the positivity budget in Eq.~\eqref{eq:l2-positivity-budget}. For example, if three Fourier components are assigned symmetrically to the three mutually anticommuting Pauli axes of a single qubit, then \(K_B=3\), and equal preservation rates satisfy \(\kappa_A^{(B)}\le 1/\sqrt3\).

\subsection{Pullback of preserved components and the coverage condition}

We now verify that, for the POVM decoder block constructed above, exactly the components in the preserved set \(\Kcal_B\) have nonzero pullbacks. We then define a coverage condition ensuring that every term in the objective survives in the decoder-consistent Hamiltonian of a product decoder.

\begin{proposition}[Pullback of Fourier components in the block]
\label{prop:block-parity-pullback}
For the block POVM in Eq.~\eqref{eq:parity-preserving-block-povm}, parity orthogonality gives
\begin{align}
  \Dcal_B^\dagger(S_{C_B,A'})
  =
  \begin{cases}
    \Id_{Q_B}, & A'=\emptyset,\\[1mm]
    \kappa_{A'}^{(B)}\Gamma_{A'}^{(B)}, & A'\in\Kcal_B,\\[1mm]
    0, & A'\notin\Kcal_B,\ A'\neq\emptyset
  \end{cases}.
  \label{eq:block-parity-pullback-formula}
\end{align}
\end{proposition}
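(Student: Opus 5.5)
The plan is to compute the pullback directly from the definition of the dual map and then use orthogonality of the parity characters on $\{\pm1\}^B$. By the formula for $\Dcal_B^\dagger$ on classical operators, for any $A'\subseteq B$ we have
\begin{align}
  \Dcal_B^\dagger(S_{C_B,A'})
  =\sum_{s_B\in\{\pm1\}^B}s^{A'}M^{(B)}_{s_B}
  =\frac{1}{2^{|B|}}\sum_{s_B}s^{A'}\,\Id_{Q_B}
  +\sum_{A\in\Kcal_B}\kappa_A^{(B)}\Gamma_A^{(B)}\,\frac{1}{2^{|B|}}\sum_{s_B}s^{A'}s^{A}.
\end{align}
This expansion is just linearity applied to Eq.~\eqref{eq:parity-preserving-block-povm}. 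The operators $\Gamma_A^{(B)}$ do not depend on $s_B$, so they factor out of the sums.

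The key ingredient is the character orthogonality relation
\[
\frac{1}{2^{|B|}}\sum_{s_B\in\{\pm1\}^B}s^{A'}s^{A}=\delta_{A,A'}.
\]
I would prove it in one line. Since $s_i^2=1$, we have $s^{A'}s^{A}=s^{A'\triangle A}$. The sum of $s^{D}$ over $\{\pm1\}^B$ factorizes over coordinates as $\prod_{i\in D}\bigl(\sum_{s_i=\pm1}s_i\bigr)\cdot 2^{|B\setminus D|}$. This equals $2^{|B|}$ if $D=\emptyset$ and $0$ otherwise. It is the same "odd in an independently summed sign" argument used earlier for the $(3,1)$ block.

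With this relation the three cases follow. The first sum is $\delta_{A',\emptyset}\Id_{Q_B}$. In the second sum only the term $A=A'$ survives, and only if $A'\in\Kcal_B$.

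\emph{Case $A'=\emptyset$.} The identity term gives $\Id_{Q_B}$. No $A\in\Kcal_B$ equals $\emptyset$, because $\Kcal_B$ consists of nonempty sets, so the second sum vanishes.

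\emph{Case $A'\in\Kcal_B$.} Here $A'$ is nonempty, so the identity term vanishes, and the second sum reduces to $\kappa_{A'}^{(B)}\Gamma_{A'}^{(B)}$.

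\emph{Case $A'\neq\emptyset$, $A'\notin\Kcal_B$.} Both contributions vanish.

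There is no real obstacle. The only points needing care are that the $\Gamma_A^{(B)}$ are pulled out of the sums, and that $\Kcal_B$ excludes $\emptyset$, which is what keeps the case $A'=\emptyset$ from picking up extra terms.

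Finally, I would note that positivity of the $M^{(B)}_{s_B}$ is not needed for this algebraic identity. It holds for the candidate elements as written, and positivity matters only for interpreting them as a POVM.
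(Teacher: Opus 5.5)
Your proposal is correct and follows the paper's proof: substitute the block POVM into $\Dcal_B^\dagger(S_{C_B,A'})=\sum_{s_B}s^{A'}M_{s_B}^{(B)}$ and apply the spin-monomial orthogonality $2^{-|B|}\sum_{s_B}s^{A'}s^{A}=\delta_{A',A}$. You also spell out that relation via $s^{A'}s^{A}=s^{A'\triangle A}$ and coordinatewise factorization, and your remarks that $\emptyset\notin\Kcal_B$ and that positivity is not needed for the identity are both accurate.
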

\begin{proof}
Substituting the POVM definition into $\Dcal_B^\dagger(S_{C_B,A'})=\sum_{s_B}s^{A'}M_{s_B}^{(B)}$ and using the orthogonality relations of spin monomials,
\begin{align}
  2^{-|B|}\sum_{s_B}s^{A'}=\delta_{A',\emptyset},
  \qquad
  2^{-|B|}\sum_{s_B}s^{A'}s^A=\delta_{A',A}
\end{align}
immediately yields Eq.~\eqref{eq:block-parity-pullback-formula}.
\end{proof}
Thus, among the nonconstant local Fourier components, exactly those included in the preserved set have nonzero pullbacks.

For the full product decoder, we say that a Fourier component \(A\subseteq V\) is \emph{covered} by the block partition and the family of preserved sets \(\{\Kcal_B\}\) if \(A_B:=A\cap B\in\Kcal_B\cup\{\emptyset\}\) for every block \(B\). If every nonconstant Fourier component of \(f\) with a nonzero coefficient is covered, we call the decoder term-complete for \(f\).
For example, if a two-body term $J_{12}s_1s_2$ appears within a block $B$ but the corresponding component $\{1,2\}$ is not included in $\Kcal_B$, Proposition~\ref{prop:block-parity-pullback} gives $\Dcal_B^\dagger(S_{C_B,\{1,2\}})=0$. Consequently, that two-body term does not appear in the decoder-consistent Hamiltonian.
% \begin{newenglish}
% As a concrete example, consider $(3,1)$-QRAO-type simultaneous decoding with $V=\{1,\ldots,6\}$ partitioned into $B_1=\{1,3,5\}$ and $B_2=\{2,4,6\}$, where each preserved set contains only one-body components. The colored quadratic objective $f(s)=J_{12}s_1s_2+J_{34}s_3s_4+J_{56}s_5s_6$ is term-complete because every edge decomposes into one preserved one-body component in each block. By contrast, a term $J_{13}s_1s_3$ is not covered because $\{1,3\}\notin\Kcal_{B_1}$.
% \end{newenglish}
Thus, term-completeness is precisely the condition that every nonconstant Fourier component appearing in the objective with a nonzero coefficient has a nonzero pullback under the decoder.

By Proposition~\ref{prop:block-parity-pullback} and the block factorization in Eq.~\eqref{eq:decoder-block-factorization}, a covered component \(A\) satisfies
\begin{align}
  \Dcal^\dagger(S_{C,A})
  =
  \left(
    \prod_{B:A_B\neq\emptyset}
    \kappa_{A_B}^{(B)}
  \right)
  \bigotimes_{B:A_B\neq\emptyset}
  \Gamma_{A_B}^{(B)},
  \label{eq:covered-global-pullback}
\end{align}
where identity operators on blocks with \(A_B=\emptyset\) are left implicit.
If, instead, \(A_B\notin\Kcal_B\cup\{\emptyset\}\) for at least one block \(B\), then
\begin{align}
  \Dcal^\dagger(S_{C,A})=0.
  \label{eq:uncovered-global-pullback}
\end{align}
Accordingly, the decoder-consistent Hamiltonian can be written as
\begin{widetext}
\[H_{\Dcal}[f]
=
\widehat f_{\emptyset}I
+
\sum_{\substack{
A\in\operatorname{supp}_{\mathrm F}(f)\setminus\{\emptyset\}:\\
A\ {\rm covered}
}}
\widehat f_A
\left(
\prod_{B:A_B\neq\emptyset}
\kappa_{A_B}^{(B)}
\right)
\bigotimes_{B:A_B\neq\emptyset}
\Gamma_{A_B}^{(B)},\]
\end{widetext}
where identity operators on blocks with \(A_B=\emptyset\) are left implicit.

For a covered component $A$, the global preservation rate is therefore $\kappa_A=\prod_{B:A_B\neq\emptyset}\kappa_{A_B}^{(B)}$. An uncovered component can equivalently be regarded as having preservation rate zero, in which case its contribution disappears from the decoder-consistent Hamiltonian. Thus, term-completeness ensures that every nonconstant Fourier component with a nonzero coefficient in the objective has a nonzero pullback contribution to the decoder-consistent Hamiltonian.

\section{Examples}
\label{sec:examples-implications}

This section presents representative decoder blocks and discusses their analytical and practical implications. We focus on the central examples in the main text and defer the randomized Pauli implementation to Appendix~\ref{app:randomized-pauli-implementation}.

Section~\ref{sec:qrao-reinterpretation} classified fixed decoders through their Fourier preservation profiles. Here we make the two directions of the framework explicit. The $(2,1)$ direct parity block instantiates the inverse-design principle of Section~\ref{sec:decoder-design}: it starts from desired one- and two-body pullbacks and constructs a POVM that realizes them. We then derive the effective POVM induced by the magic rounding of Teramoto et al. from its actual measurement and classical post-processing. Although the two examples use the same formal assignment of spin monomials to Pauli operators, their preservation coefficients are determined by the corresponding POVMs rather than by that formal assignment alone.

\subsection{\texorpdfstring{$(2,1)$ direct parity block}{(2,1) direct parity block}}

Consider a block that compresses two classical variables \(s_1,s_2\) into a single qubit while preserving both one-body components \(s_1,s_2\) and the two-body component \(s_1s_2\). We take \(B=\{1,2\}\) and \(\mathcal K_B=\{\{1\},\{2\},\{1,2\}\}\) as the preserved set, and choose \(\Gamma_{\{1\}}=X\), \(\Gamma_{\{2\}}=Y\), and \(\Gamma_{\{1,2\}}=Z\).
The corresponding four-outcome POVM is
\begin{align}
  M_{s_1,s_2}^{(2,1)}
  =
  \frac14
  \left(
    \Id+a s_1X+b s_2Y+c s_1s_2Z
  \right),
  \quad
  a,b,c>0.
  \label{eq:21-direct-parity-povm}
\end{align}
Since \(X,Y,Z\) are mutually anticommuting, this is a valid POVM whenever \(a^2+b^2+c^2\le1\).

The corresponding pullbacks are
\begin{align}
  \Dcal_B^\dagger(S_{C_B,\{1\}})=aX,
  \\
  \Dcal_B^\dagger(S_{C_B,\{2\}})=bY,
  \\
  \Dcal_B^\dagger(S_{C_B,\{1,2\}})=cZ.
  \label{eq:21-direct-parity-pullback}
\end{align}
Hence, for the local objective $f_B(s_1,s_2)=h_1s_1+h_2s_2+J_{12}s_1s_2$, the decoder-consistent Hamiltonian is \begin{align}
  H_{\Dcal_B}[f_B]
  =
  ah_1X+bh_2Y+cJ_{12}Z.
  \label{eq:21-direct-parity-hamiltonian}
\end{align}
The symmetric choice \(a=b=c=1/\sqrt3\) gives the single-qubit tetrahedral SIC-POVM~\cite{renes2004sic}.

When the symmetric direct parity block is applied to MaxCut, two different effective preservation rates arise for intra-block and inter-block edges. A two-rate lower bound can then be derived for the maximum eigenvalue of the decoder-consistent Hamiltonian and for the expected cut value obtained by decoding its maximum-eigenvalue state. Because it separately accounts for the contribution of intra-block edges, this bound can be tighter than the single-rate bound in Eq.~\eqref{eq:maxcut-alpha-bound-decoded}. The explicit formula and its derivation are given in Appendix~\ref{app:maxcut-two-rate-bound}.

\subsection{\texorpdfstring{POVM induced by the magic rounding of Teramoto et al.}{POVM induced by the magic rounding of Teramoto et al.}}

The magic-rounding procedure of Teramoto et al.~\cite{teramoto2023}, when its random measurement choice and classical post-processing are combined into one effective measurement, provides an example of a POVM decoder that retains a two-body spin monomial within the same block.
The symmetric direct parity block and this induced POVM use the same formal assignment \(s_1\mapsto X\), \(s_2\mapsto Y\), and \(s_1s_2\mapsto Z\). The difference is that the former realizes this assignment directly as a four-outcome POVM, whereas the latter realizes it as the POVM induced by randomly selecting a magic basis, performing a binary projective measurement, and applying classical post-processing.

For each output \(s=(s_1,s_2)\), define the tetrahedral direction \(\vec n_s=(s_1,s_2,s_1s_2)/\sqrt3\). Combining the random magic-basis selection, binary projective measurement, and classical post-processing into a single POVM gives the induced effect
\begin{align}
  M_{s_1,s_2}^{\mathrm T}
  =
  \frac14
  \left[
    \Id+
    \frac{2}{3\sqrt3}
    (s_1X+s_2Y+s_1s_2Z)
  \right].
  \label{eq:teramoto-induced-povm}
\end{align}
A detailed derivation is given in Appendix~\ref{app:teramoto-derivation}.

Therefore, for this POVM decoder, $\Dcal_{\mathrm T}^\dagger(S_{C,\{1\}})=\kappa_{\mathrm T}X$, $\Dcal_{\mathrm T}^\dagger(S_{C,\{2\}})=\kappa_{\mathrm T}Y$, and $\Dcal_{\mathrm T}^\dagger(S_{C,\{1,2\}})=\kappa_{\mathrm T}Z$, where $\kappa_{\mathrm T}=2/(3\sqrt3)$.
For the same local objective $f_B(s_1,s_2)$, the decoder-consistent Hamiltonian is 
\begin{align}
H_{\Dcal_{\mathrm T}}[f_B]
=
\kappa_{\mathrm T}
(h_1X+h_2Y+J_{12}Z).
\label{eq:teramoto-decoder-consistent-hamiltonian}
\end{align}
Importantly, this common coefficient is determined by the actual POVM induced by the measurement and classical post-processing, rather than by the formal spin-to-operator assignment alone. This comparison shows that the same decoder-first pullback viewpoint can be used both to design new decoders, such as the direct parity block, and to reinterpret existing procedures, such as the magic rounding of Teramoto et al.

\section{Conclusion}
\label{sec:conclusion}

In this paper, we developed a decoder-first framework for characterizing the relation between a fixed POVM decoder and the quantum objective used in compression-based quantum optimization. For a classical objective function \(f\), the pullback
\[
H_{\Dcal}[f]=\Dcal^\dagger(F_C)=\sum_s f(s)M_s
\]
exactly represents the expected post-decoding objective value. Furthermore, in the nontrivial case, its positive affine class is the unique class of Hermitian objectives that induces the same strict ordering over the full quantum state space.

The Fourier-pullback analysis shows that decoder consistency is a structural property of the decoder--objective pair rather than an automatic consequence of code-state reproduction. The QRAO examples illustrate both sides of this distinction: for MaxCut, a common attenuation factor across all relevant components recovers the known positive affine relation, whereas for mixed-degree QUBO objectives, component-dependent attenuation can destroy that relation under the simultaneous-decoding POVMs analyzed here. The same pullback representation can also be used constructively, by selecting which Fourier components a decoder should preserve and allocating their strengths subject to POVM positivity.

Practical use of the framework additionally depends on the reachability of favorable states and on the measurement or sampling cost of the chosen decoder and objective under finite-shot and noisy conditions. These implementation questions are problem- and hardware-dependent.

Overall, the decoder-first viewpoint provides both a criterion for determining when a quantum objective is consistent with a prescribed decoder and a constructive language for designing decoders around the structure of the classical objective.

\appendix

\section{Technical preliminaries: POVMs and basic lemmas}
\label{app:technical-preliminaries}

\subsection{Contractivity of the dual decoder map}
\label{app:contractivity}

\begin{lemma}
Consider the quantum-to-classical channel \(\Dcal\) determined by a POVM \(\{M_s\}_{s\in\X}\). On the algebra of diagonal classical operators, the dual map \(\Dcal^\dagger\) is positive and unital. Therefore, for any diagonal Hermitian operator \(A\) on the classical register,
\[
\|\Dcal^\dagger(A)\|_\infty\le \|A\|_\infty.
\]
\end{lemma}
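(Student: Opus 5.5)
The plan is to prove positivity and unitality first, and then obtain the norm bound from the operator sandwich $-\|A\|_\infty I\le A\le\|A\|_\infty I$.

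Let $A=\sum_{s\in\X}a(s)\ket{s}_C\bra{s}$ be diagonal, so that $\Dcal^\dagger(A)=\sum_s a(s)M_s$ by the formula in Sec.~\ref{sec:framework}. If $A\ge0$, every coefficient $a(s)$ is nonnegative, and since each $M_s\ge0$ the sum is a nonnegative combination of positive operators; hence $\Dcal^\dagger(A)\ge0$. Unitality follows by taking $a\equiv1$, which gives $\Dcal^\dagger(I_C)=\sum_s M_s=I$ by POVM completeness. Linearity of $\Dcal^\dagger$ is immediate from the explicit formula.

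For the norm bound, let $A$ be diagonal and Hermitian, so every $a(s)$ is real, and put $\lambda:=\|A\|_\infty=\max_s|a(s)|$. Both $\lambda I_C-A$ and $A+\lambda I_C$ are diagonal with nonnegative entries. Applying positivity and unitality to each gives
\begin{align}
-\lambda I\le \Dcal^\dagger(A)\le \lambda I .
\end{align}
This is the same argument as in Proposition~\ref{prop:spectral-range}. The operator $\Dcal^\dagger(A)=\sum_s a(s)M_s$ is Hermitian because the $a(s)$ are real and the $M_s$ are Hermitian. For a Hermitian operator, the two-sided bound places all eigenvalues in $[-\lambda,\lambda]$, so its operator norm, which equals its largest absolute eigenvalue, is at most $\lambda$.

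No step is a real obstacle. The one point that needs care is that the argument relies on $A$ being Hermitian: the sandwich bound converts into a norm bound only for Hermitian operators, so the equality between operator norm and spectral radius must be used in that setting.
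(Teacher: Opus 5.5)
Your proof is correct and matches the paper's argument step for step. Positivity comes from the nonnegative coefficients and $M_s\ge0$, unitality from POVM completeness, and the norm bound from pushing the sandwich $-\|A\|_\infty I_C\le A\le\|A\|_\infty I_C$ through the positive, unital, linear map $\Dcal^\dagger$; you also state explicitly the linearity and the Hermiticity of $\Dcal^\dagger(A)$ needed to turn the operator inequality into a norm bound, which the paper leaves implicit.
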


\begin{proof}
For a diagonal classical operator
\[
A=\sum_s a_s\ket{s}\bra{s},
\]the dual map is
\[
\Dcal^\dagger(A)=\sum_s a_sM_s.
\]If \(A\ge0\), then \(a_s\ge0\) for all \(s\), and hence \(\Dcal^\dagger(A)\ge0\). Moreover, the POVM completeness relation gives
\[
\Dcal^\dagger(I_C)=\sum_sM_s=I_Q,
\]so \(\Dcal^\dagger\) is unital. Since
\[
-\|A\|_\infty I_C
\le A\le
\|A\|_\infty I_C,
\]positivity and unitality imply
\[
-\|A\|_\infty I_Q
\le
\Dcal^\dagger(A)
\le
\|A\|_\infty I_Q.
\]Therefore,
\[
\|\Dcal^\dagger(A)\|_\infty
\le
\|A\|_\infty.
\]
\end{proof}
\subsection{Proof of the order-uniqueness theorem}
\label{app:order-uniqueness-proof}

\begin{proof}
First assume condition 2. If $H=aG+bI$ and $a>0$, then for any state $\rho$, $\Tr(H\rho)=a\Tr(G\rho)+b$, which immediately implies condition 1. We next assume condition 1 and prove condition 2.
Let \(V\) denote the real vector space of traceless Hermitian operators. Any difference of two quantum states belongs to \(V\). Conversely, for any nonzero \(\Delta\in V\), write its positive-negative decomposition as \(\Delta=\Delta_+-\Delta_-\). Since \(\Tr\Delta=0\), we have \(\Tr\Delta_+=\Tr\Delta_-=:c>0\). Hence, defining \(\rho_\pm=\Delta_\pm/c\), we obtain \(\Delta=c(\rho_+-\rho_-)\). Therefore, condition 1 is equivalent to
\begin{align}
\Tr(H\Delta)>0
\iff
\Tr(G\Delta)>0,
\qquad
\forall \Delta\in V.
\label{eq:order-functional-sign}
\end{align}
% \begin{pdfenglish}
% The difference of quantum states $\Delta=\rho_1-\rho_2$ is a traceless Hermitian operator. Conversely, any nonzero traceless Hermitian operator can be written as $\Delta=c(\rho_+-\rho_-)$ for some $c>0$ and two quantum states. Therefore condition 1 is equivalent to $\Tr(H\Delta)>0$ if and only if $\Tr(G\Delta)>0$ for every $\Delta$ in the real vector space $V$ of traceless Hermitian operators.
% \end{pdfenglish}

Define the real linear functionals \(L_H(\Delta)=\Tr(H\Delta)\) and \(L_G(\Delta)=\Tr(G\Delta)\) on \(V\). Since \(G\) is not a scalar multiple of the identity, \(L_G\neq0\). Equation~\eqref{eq:order-functional-sign} implies that \(L_H\) and \(L_G\) have the same positive region. In particular, their kernels must coincide. Indeed, if \(L_G(\Delta_0)=0\) but \(L_H(\Delta_0)\neq0\), then either \(\Delta_0\) or \(-\Delta_0\) has positive \(L_H\) value while its \(L_G\) value remains zero, contradicting Eq.~\eqref{eq:order-functional-sign}. The reverse inclusion follows in the same way.
Nonzero linear functionals with the same kernel are proportional, so \(L_H=aL_G\) for some \(a\neq0\). Choosing any \(\Delta\in V\) with \(L_G(\Delta)>0\) and using Eq.~\eqref{eq:order-functional-sign} shows that \(a>0\). Hence
\[
\Tr[(H-aG)\Delta]=0
\]for every \(\Delta\in V\). Since the Hilbert--Schmidt orthogonal complement of the traceless Hermitian subspace is \(\operatorname{span}_{\mathbb R}\{I\}\), there exists \(b\in\mathbb R\) such that
\[
H-aG=bI.
\]Therefore, \(H=aG+bI\), proving condition 2.
\end{proof}

\section{Additional guarantees for decoded solutions}
\label{app:guarantees-and-utilities}

\subsection{Near-optimal expectation and decoded optimal-solution probability}
\label{app:eigenvalue-closeness}

We show that when the expectation value of the decoder-consistent Hamiltonian is close to the classical optimum, one obtains a quantitative lower bound on the probability of decoding an optimal solution. This follows because the expectation value of \(H_{\Dcal}[f]\) is not merely a surrogate energy but exactly the expected post-decoding objective value.
This proposition does not provide a procedure for directly maximizing the probability of obtaining an optimal solution. Rather, assuming that the expected post-decoding objective value is already close to \(f_{\max}\), it quantifies the resulting concentration of probability on the optimal or near-optimal solution set.
\begin{proposition}[Near-optimal expectation and decoded optimal-solution probability]\label{prop:eigenvalue-closeness-optimal-probability}
Let $f_{\max}:=\max_s f(s)$ and $\X_{\max}:=\{s:f(s)=f_{\max}\}$. Unless all configurations are optimal, define the optimality gap $\Delta_\star:=f_{\max}-\max_{s\notin\X_{\max}}f(s)>0$. For any quantum state \(\rho\), let \(p_\rho(s):=\Tr(M_s\rho)\) denote the decoded distribution. If, for some \(\varepsilon\ge0\),
\begin{align}
\Tr(H_{\Dcal}[f]\rho)\ge f_{\max}-\varepsilon,\label{eq:near-optimal-decoded-expectation}
\end{align}
then
\[
\sum_{s\in\X_{\max}}p_\rho(s)
\ge
1-\frac{\varepsilon}{\Delta_\star}.
\]
The bound is informative when $\varepsilon<\Delta_\star$; otherwise its nontrivial lower bound is zero.
In particular, the same conclusion holds for a maximum-eigenvalue state \(\rho_{\max}\) whenever
\[
\lambda_{\max}(H_{\Dcal}[f])
\ge
f_{\max}-\varepsilon.
\]
\end{proposition}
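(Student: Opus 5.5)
The plan is a Markov-type argument applied to the decoded distribution \(p_\rho\). First, by decoder consistency (Eq.~\eqref{eq:decoded-expectation-derivation}), \(\Tr(H_{\Dcal}[f]\rho)=\sum_s p_\rho(s)f(s)\), where \(p_\rho(s)=\Tr(M_s\rho)\) is a genuine probability distribution on \(\X\). The nonnegativity follows from \(M_s\ge0\), and normalization follows from \(\sum_sM_s=I\). Everything therefore reduces to a purely classical statement about a random variable \(f(S)\) with \(S\sim p_\rho\).

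Next, I consider the nonnegative gap variable \(g(s):=f_{\max}-f(s)\ge0\). It vanishes exactly on \(\X_{\max}\), and \(g(s)\ge\Delta_\star\) for every \(s\notin\X_{\max}\) by the definition of \(\Delta_\star\). Writing \(q:=\sum_{s\notin\X_{\max}}p_\rho(s)\), I obtain
\[
\varepsilon\ \ge\ f_{\max}-\Tr(H_{\Dcal}[f]\rho)=\sum_s p_\rho(s)g(s)\ \ge\ \Delta_\star\, q .
\]
The first inequality is the hypothesis Eq.~\eqref{eq:near-optimal-decoded-expectation}. The last step drops the zero contributions from \(\X_{\max}\) and lower-bounds the rest by \(\Delta_\star\). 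Dividing by \(\Delta_\star>0\) gives \(q\le\varepsilon/\Delta_\star\), and hence \(\sum_{s\in\X_{\max}}p_\rho(s)=1-q\ge1-\varepsilon/\Delta_\star\). When \(\varepsilon\ge\Delta_\star\), the right-hand side is nonpositive, so the bound holds trivially, which matches the remark in the statement.

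For the final claim about \(\rho_{\max}\), I take \(\rho_{\max}\) to be the projector onto a maximum-eigenvalue eigenvector. Then \(\Tr(H_{\Dcal}[f]\rho_{\max})=\lambda_{\max}(H_{\Dcal}[f])\), so the hypothesis \(\lambda_{\max}\ge f_{\max}-\varepsilon\) is exactly Eq.~\eqref{eq:near-optimal-decoded-expectation} for \(\rho=\rho_{\max}\), and the general case applies. There is no real obstacle here. The only care needed is to check that \(\Delta_\star\) is well defined and positive when not all configurations are optimal, since \(\X\) is finite and the maximum over the complement is strictly below \(f_{\max}\).
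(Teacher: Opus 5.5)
Your proposal is correct and follows essentially the same route as the paper. Both arguments use decoder consistency to write $f_{\max}-\Tr(H_{\Dcal}[f]\rho)=\sum_s(f_{\max}-f(s))p_\rho(s)\ge\Delta_\star\sum_{s\notin\X_{\max}}p_\rho(s)$, take the complement, and handle the maximum-eigenvalue case via $\Tr(H_{\Dcal}[f]\rho_{\max})=\lambda_{\max}(H_{\Dcal}[f])$.
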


\begin{proof}
    Decoder consistency gives
\[
\Tr(H_{\Dcal}[f]\rho)=\sum_s f(s)p_\rho(s).
\]
Hence,
\[
\begin{aligned}
f_{\max}-\Tr(H_{\Dcal}[f]\rho)
&=
\sum_s\bigl(f_{\max}-f(s)\bigr)p_\rho(s)\\
&\ge
\Delta_\star
\sum_{s\notin\X_{\max}}p_\rho(s).
\end{aligned}
\]By assumption, the left-hand side is at most \(\varepsilon\), and therefore
\[
\sum_{s\notin\X_{\max}}p_\rho(s)
\le
\frac{\varepsilon}{\Delta_\star}.
\]Taking the complement proves the claimed bound. For a maximum-eigenvalue state, use \(\Tr(H_{\Dcal}[f]\rho_{\max})=\lambda_{\max}(H_{\Dcal}[f])\).
\end{proof}

\begin{corollary}[Concentration on near-optimal solutions]
    For any \(\eta>0\), define
\[
\X_\eta:=\{s\in\X:f(s)\ge f_{\max}-\eta\}.
\]
If \(\rho\) satisfies Eq.~\eqref{eq:near-optimal-decoded-expectation}, then
\[
\sum_{s\in\X_\eta}p_\rho(s)
\ge
1-\frac{\varepsilon}{\eta}.
\]
\end{corollary}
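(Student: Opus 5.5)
The corollary follows by repeating the Markov-type argument used in the proof of Proposition~\ref{prop:eigenvalue-closeness-optimal-probability}, with the threshold $\eta$ in place of the optimality gap $\Delta_\star$. By decoder consistency, Eq.~\eqref{eq:decoded-expectation-derivation}, the expectation $\Tr(H_{\Dcal}[f]\rho)$ equals $\sum_s f(s)p_\rho(s)$. Hence the assumption in Eq.~\eqref{eq:near-optimal-decoded-expectation} can be rewritten as
\[
\sum_{s\in\X}\bigl(f_{\max}-f(s)\bigr)p_\rho(s)\le\varepsilon,
\]
using $\sum_s p_\rho(s)=1$.

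Every summand on the left is nonnegative, since $f(s)\le f_{\max}$ and $p_\rho(s)=\Tr(M_s\rho)\ge0$. I will therefore drop all terms with $s\in\X_\eta$ and keep only those with $s\notin\X_\eta$. For those configurations the definition of $\X_\eta$ gives $f(s)<f_{\max}-\eta$, so $f_{\max}-f(s)>\eta$. This yields
\[
\eta\sum_{s\notin\X_\eta}p_\rho(s)\le\sum_{s\notin\X_\eta}\bigl(f_{\max}-f(s)\bigr)p_\rho(s)\le\varepsilon.
\]
Dividing by $\eta>0$ and taking the complement gives $\sum_{s\in\X_\eta}p_\rho(s)\ge1-\varepsilon/\eta$.

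There is no real obstacle. The only point to check is the edge case where the complement of $\X_\eta$ is empty, and then the bound holds trivially. Unlike the preceding proposition, no nondegeneracy assumption on the gap is needed, because $\eta$ is chosen freely and is strictly positive.
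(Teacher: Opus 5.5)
Your proof is correct and matches the paper's argument: you use decoder consistency to write the hypothesis as $\sum_s\bigl(f_{\max}-f(s)\bigr)p_\rho(s)\le\varepsilon$, bound this loss below by $\eta\sum_{s\notin\X_\eta}p_\rho(s)$ because $f_{\max}-f(s)>\eta$ off $\X_\eta$, and take the complement. Your added remarks on the empty-complement case and on why no gap assumption is needed are accurate.
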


\begin{proof}
If $s\notin\X_\eta$, then $f_{\max}-f(s)>\eta$. Repeating the previous loss calculation gives $f_{\max}-\Tr(H_{\Dcal}[f]\rho)\ge\eta\sum_{s\notin\X_\eta}p_\rho(s)$. The assumed bound then yields $\sum_{s\notin\X_\eta}p_\rho(s)\le\varepsilon/\eta$, and taking the complement proves the claim.
\end{proof}

\subsection{Test states and effective preservation rates}
\label{app:test-states}

We now make precise the meaning of a test state. Fix a decoder \(\Dcal\) and a set \(\mathcal I\) of nonconstant Fourier components of interest, and suppose that \(O_A^\Dcal:=\Dcal^\dagger(S_{C,A})\neq0\) for every \(A\in\mathcal I\). Write
\[
O_A^\Dcal=\kappa_A\widetilde O_A,
\qquad
\kappa_A=\|O_A^\Dcal\|_\infty,
\qquad
\|\widetilde O_A\|_\infty=1.
\]
\begin{definition}[Test-state family]
\label{def:test-state}
A family of density operators $\{\tau_s\}_{s\in\X}$ is called a test-state family for $\mathcal I$ if there exist signal rates $\mu_A\in[0,1]$ such that
\begin{align}
\Tr(\widetilde O_A\tau_s)
=
\mu_A s^A
\label{eq:test-state-definition}
\end{align}
for every $s\in\X$ and $A\in\mathcal I$.
\end{definition}

Equivalently, 
\begin{align}
  \Tr\!\left(\Dcal^\dagger(S_{C,A})\tau_s\right)
  =
  \alpha_A s^A,
  \qquad
  \alpha_A:=\kappa_A\mu_A.
  \label{eq:test-state-effective-rate}
\end{align}
Here, \(\kappa_A\) is the decoder-side preservation rate, \(\mu_A\) is the test-state signal rate, and \(\alpha_A\) is the resulting effective preservation rate.
Such a test-state family need not exist for every POVM decoder and every chosen component set. The lower bounds below are therefore conditional on the existence, or explicit construction, of physical states satisfying Eq.~\eqref{eq:test-state-definition} for the selected decoder block.
For the anticommuting-Pauli constructions used below, such test states can be constructed explicitly. Suppose that \(\{\Gamma_A^{(B)}\}_{A\in\Kcal_B}\) are traceless, mutually anticommuting Pauli operators and let \(K_B=|\Kcal_B|\). For each local configuration \(s_B\), define
\[
\tau_{s_B}^{(B)}
:=
\frac1{\dim\mathcal H_{Q_B}}
\left[
I+
\frac1{\sqrt{K_B}}
\sum_{A\in\Kcal_B}s_B^A\Gamma_A^{(B)}
\right].
\]
By mutual anticommutativity, the nonidentity term in brackets has operator norm one. Hence \(\tau_{s_B}^{(B)}\) is positive, and tracelessness of the Pauli operators gives \(\Tr\tau_{s_B}^{(B)}=1\).

Pauli orthogonality further gives
\begin{align}
  \Tr\!\left(\Gamma_A^{(B)}\tau_{s_B}^{(B)}\right)
  =
  \frac{1}{\sqrt{K_B}}s_B^A,
  \qquad A\in\Kcal_B,
  \label{eq:anticommuting-pauli-test-state-signal}
\end{align}
so the local signal rate is
\[
\mu_A^{(B)}=\frac1{\sqrt{K_B}}.
\]
Tensor products of the block states give a global test-state family, and the signal rates multiply across the blocks intersected by the component.

For the $(2,1)$ direct parity block, $K_B=3$ and $\tau_{s_1,s_2}=\frac12[I+(s_1X+s_2Y+s_1s_2Z)/\sqrt3]$, giving $\mu=1/\sqrt3$.
For the lower-bound argument, a full family indexed by all \(s\) is not necessary. It is sufficient to have a single physical state \(\tau_{s^*}\) satisfying the required component relations for a classical optimal solution \(s^*\). We therefore also refer to such a single witness as a test state. The test state need not be prepared by the algorithm; it serves only as a witness state for lower-bounding \(\lambda_{\max}(H_\Dcal[f])\) through the variational principle.
\paragraph{Specialization to term-complete decoders.}
\label{app:term-complete-guarantee}

For a term-complete decoder, every nonconstant Fourier component appearing in the objective is covered by the blockwise preserved sets. Its global preservation rate therefore factorizes as
\begin{align}
  \kappa_A
  =
  \prod_{B:A_B\neq\emptyset}\kappa_{A_B}^{(B)}.
  \label{eq:term-complete-kappa-product}
\end{align}
If a test state $\tau_{s^*}$ associated with a classical optimal solution satisfies 
\begin{align}
  \Tr\!\left[
    \left(
    \bigotimes_{B:A_B\neq\emptyset}\Gamma_{A_B}^{(B)}
    \right)
    \tau_{s^*}
  \right]
  =
  \mu_A(s^*)^A
  \label{eq:general-code-state-signal},
\end{align}
then $\alpha_A=(\prod_{B:A_B\neq\emptyset}\kappa_{A_B}^{(B)})\mu_A$, which can be substituted directly into Eq.~\eqref{eq:general-test-state-bound}.
An uncovered component of a non-term-complete decoder may be viewed as having effective preservation rate zero. If the decoder is term-complete and all nonconstant components share a common effective preservation rate \(\alpha\), Eq.~\eqref{eq:uniform-alpha-bound} yields
\begin{align}
  \lambda_{\max}(H_{\Dcal}[f])
  \ge
  C+\alpha(f_{\max}-C).
  \label{eq:term-complete-uniform-bound}
\end{align}
\subsection{MaxCut lower bound with distinct effective preservation rates for intra-block and inter-block edges}
\label{app:maxcut-two-rate-bound}

We derive a MaxCut lower bound for the case in which intra-block and inter-block edges have different effective preservation rates. This result is useful for decoders such as the symmetric direct parity block, whose preservation profile naturally splits into two rates.
For an unweighted graph $G=(V,E)$, let $\Cut(s)=|E|/2-(1/2)\sum_{(i,j)\in E}s_is_j$. Partition the edge set into intra-block edges $E_{\rm in}$ and inter-block edges $E_{\rm out}$, let $\lambda:=|E_{\rm out}|/|E|$, and write the classical optimum as $\OPT=(1/2+\varepsilon)|E|$ with $0\le\varepsilon\le1/2$.
Let $s^*$ be a classical optimal solution and define $u_{ij}:=-\frac12s_i^*s_j^*\in\{-1/2,1/2\}$. Since $\OPT=(1/2+\varepsilon)|E|$, we have $\sum_{(i,j)\in E}u_{ij}=\varepsilon|E|$.
Define $u_{\rm in}:=|E|^{-1}\sum_{(i,j)\in E_{\rm in}}u_{ij}$ and $u_{\rm out}:=|E|^{-1}\sum_{(i,j)\in E_{\rm out}}u_{ij}$. Then $u_{\rm in}+u_{\rm out}=\varepsilon$, with $-(1-\lambda)/2\le u_{\rm in}\le(1-\lambda)/2$ and $-\lambda/2\le u_{\rm out}\le\lambda/2$.
Let \(\tau_{s^*}\) be a test state associated with a classical optimal solution \(s^*\). Assume that the two-body components have effective preservation rates \(\alpha_{\rm in}\) and \(\alpha_{\rm out}\) on intra-block and inter-block edges, respectively, i.e.,
\begin{align}
  \Tr\!\left(\Dcal^\dagger(S_{C,\{i,j\}})\tau_{s^*}\right)
  =\begin{cases}
\alpha_{\rm in}s_i^*s_j^*, &(i,j)\in E_{\rm in},\\
\alpha_{\rm out}s_i^*s_j^*, &(i,j)\in E_{\rm out}.
\end{cases}
\end{align}
Evaluating the decoder-consistent Hamiltonian on \(\tau_{s^*}\) and applying the variational principle gives
\begin{align}
\frac{\lambda_{\max}(H_{\Dcal}[\Cut])}{|E|}
\ge
\frac12+\alpha_{\rm in}u_{\rm in}
+\alpha_{\rm out}u_{\rm out}.
\label{eq:maxcut-two-rate-linear-bound}
\end{align}
First assume $\alpha_{\rm in}\ge\alpha_{\rm out}$. Substituting $u_{\rm out}=\varepsilon-u_{\rm in}$ gives $1/2+\alpha_{\rm out}\varepsilon+(\alpha_{\rm in}-\alpha_{\rm out})u_{\rm in}$, which is minimized by the smallest allowed value of $u_{\rm in}$.
Therefore,
\[
u_{\rm in}\ge
\max\left\{-\frac{1-\lambda}{2},
\varepsilon-\frac{\lambda}{2}\right\},
\]and hence the stronger of the two resulting lower bounds is their maximum.
The case \(\alpha_{\rm in}<\alpha_{\rm out}\) follows symmetrically. Defining \(\Delta\alpha:=|\alpha_{\rm in}-\alpha_{\rm out}|\) and using decoder consistency for a maximum-eigenvalue state \(\rho_{\max}\), we obtain
\begin{widetext}
\begin{align}
\frac{\E_{s\sim\Dcal(\rho_{\max})}[\Cut(s)]}{\OPT}\ge
\max\left\{
\frac{1/2+\alpha_{\rm out}\varepsilon-\Delta\alpha(1-\lambda)/2}{1/2+\varepsilon},
\frac{1/2+\alpha_{\rm in}\varepsilon-\Delta\alpha\lambda/2}{1/2+\varepsilon}
\right\}\label{eq:maxcut-two-rate-ratio-bound}.
\end{align}
\end{widetext}

When there are no intra-block edges, the bound reduces to the common-rate result in the main text. Indeed, \(E_{\rm in}=\emptyset\) implies \(\lambda=1\) and \(u_{\rm in}=0\), so Eq.~\eqref{eq:maxcut-two-rate-linear-bound} becomes
\[
\frac{\lambda_{\max}(H_\Dcal[\Cut])}{|E|}
\ge
\frac12+\alpha_{\rm out}\varepsilon.
\]This is exactly Eq.~\eqref{eq:maxcut-alpha-bound-decoded} with the common effective preservation rate \(\alpha=\alpha_{\rm out}\), and gives the ratio \((1/2+\alpha_{\rm out}\varepsilon)/(1/2+\varepsilon)\).
For the symmetric $(2,1)$ direct parity block, $\alpha_{\rm in}=1/3$ and $\alpha_{\rm out}=1/9$. Substituting into Eq.~\eqref{eq:maxcut-two-rate-ratio-bound} gives
\begin{align}
  \frac{\E_{s\sim\Dcal(\rho_{\max})}[\Cut(s)]}{\OPT}
  &=
  \frac{\lambda_{\max}(H_{\Dcal}[\Cut])}{\OPT}
  \\
  &\ge
  \max\left\{
  \frac{7+2\lambda+2\varepsilon}{9+18\varepsilon},
  \frac{9-2\lambda+6\varepsilon}{9+18\varepsilon}
  \right\}.
  \label{eq:21-direct-parity-two-rate-bound}
\end{align}
Both terms in the maximum decrease with \(\varepsilon\in[0,1/2]\), so the worst case occurs at \(\varepsilon=1/2\). At this point, the two terms become \((4+\lambda)/9\) and \((6-\lambda)/9\), respectively. Since \(0\le\lambda\le1\), the latter is larger, yielding
\[
\frac{\E_{s\sim\Dcal(\rho_{\max})}[\Cut(s)]}{\OPT}
\ge
\frac{6-\lambda}{9}.
\]At \(\lambda=0\), all edges are intra-block and the bound is \(2/3\); at \(\lambda=1\), all edges are inter-block and the bound reduces to \(5/9\).
\section{Pullback calculations and decoder-block details}
\label{app:pullbacks-and-blocks}

\subsection{\texorpdfstring{Supplementary details on the \((n,n-1)\)-QRAC simultaneous-decoding POVM}{Supplementary details on the (n,n-1)-QRAC simultaneous-decoding POVM}}
\label{app:nnminus1-qrao}

The detailed $(n,n-1)$-QRAC construction follows Ref.~\cite{suzuki2026nnminus1qrac}. Here we summarize only the simultaneous-decoding POVM and pullback structure needed for the coefficient relations used in the main text.
Let $E_n:=\{x\in\{0,1\}^n:|x|\equiv0\pmod2\}$ and $O_n:=\{x\in\{0,1\}^n:|x|\equiv1\pmod2\}$. In the $(n,n-1)$-QRAC, each $x$ corresponds to a code state $\ket{\psi_x}$ on $n-1$ qubits, and both $\{\ket{\psi_x}:x\in E_n\}$ and $\{\ket{\psi_x}:x\in O_n\}$ form orthonormal bases.
We define the block simultaneous-decoding POVM by $M_x^{\mathrm{sim}}:=\frac12\ket{\psi_x}\bra{\psi_x}$. Since the even- and odd-parity code states each form a complete orthonormal basis, this is a valid POVM. Equivalently, this POVM combines into a single measurement the procedure of choosing the even- or odd-parity basis with probability \(1/2\) and then performing the corresponding projective measurement.
To determine the coefficient of a one-body component, let \(\{M_{0|k},M_{1|k}\}\) denote the binary POVM for reading the \(k\)-th bit, and define the corresponding observable \(O_k:=M_{0|k}-M_{1|k}\). Setting \(\nu:=(n-1)/n\), the construction of Ref.~\cite{suzuki2026nnminus1qrac} gives
\[
O_k=
\frac{1}{2\sqrt{\nu}}
\sum_{x\in\{0,1\}^n}
(-1)^{x_k}\ket{\psi_x}\bra{\psi_x}.
\]

The pullback of the one-body component is 
\begin{align}
  (\Dcal_B^{\mathrm{sim}})^\dagger(S_{C_B,\{k\}})
  &=
  \sum_{x\in\{0,1\}^n}(-1)^{x_k}M_x^{\mathrm{sim}}
  \nonumber\\
  &=
  \frac12
  \sum_{x\in\{0,1\}^n}(-1)^{x_k}\ket{\psi_x}\bra{\psi_x}
  \nonumber\\
  &=
  \sqrt\nu\,O_k .
  \label{eq:nnminus1-single-pullback-app}
\end{align}
Therefore its preservation rate is $\sqrt\nu=\sqrt{(n-1)/n}$.
When two variables \(i\) and \(j\) belong to different blocks, the pullback factorizes into the tensor product of their one-body pullbacks, giving
\begin{align}
  \Dcal^\dagger(S_{C,\{i,j\}})
  =
  \nu\,O_iO_j,
  \qquad B(i)\neq B(j).
  \label{eq:nnminus1-interblock-pair-pullback-app}
\end{align}
For two variables in the same block, the two-body pullback does not in general reduce to the simple product \(O_iO_j\). In the main text, we therefore restrict attention to interaction edges connecting different blocks, as ensured by the coloring constraint.
\subsection{Positivity budget for general block POVMs}
\label{app:positivity-and-pullback}

We now verify positivity of the general block POVM in Eq.~\eqref{eq:parity-preserving-block-povm}.
For each \(s_B\), define
\[
G_B(s_B):=\sum_{A\in\Kcal_B}\kappa_A^{(B)}s^A\Gamma_A^{(B)}.
\]Then \(M_{s_B}^{(B)}\ge0\) is equivalent to \(I+G_B(s_B)\ge0\). Since \(G_B(s_B)\) is Hermitian, the condition
\[
\|G_B(s_B)\|_\infty\le1
\]ensures that all its eigenvalues lie in \([-1,1]\), and hence \(I+G_B(s_B)\ge0\).

If the $\Gamma_A^{(B)}$ are mutually anticommuting Pauli operators, then 
\begin{align}
  G_B(s_B)^2
  &=
  \sum_A\left(\kappa_A^{(B)}\right)^2(s^A)^2\left(\Gamma_A^{(B)}\right)^2
  \\
  &+
  \sum_{A<A'}\kappa_A^{(B)}\kappa_{A'}^{(B)}s^As^{A'}
  \left(\Gamma_A^{(B)}\Gamma_{A'}^{(B)}+\Gamma_{A'}^{(B)}\Gamma_A^{(B)}\right)
  \nonumber\\
  &=
  \left[\sum_A\left(\kappa_A^{(B)}\right)^2\right]\Id,
\end{align}
because \((s^A)^2=1\), \((\Gamma_A^{(B)})^2=I\), and all cross terms vanish by anticommutativity.
Hence $\|G_B(s_B)\|_\infty=\sqrt{\sum_A(\kappa_A^{(B)})^2}$, and the sufficient positivity condition simplifies to
\[
\sum_{A\in\Kcal_B}
\left(\kappa_A^{(B)}\right)^2
\le1.
\]
\subsection{Randomized Pauli implementation}
\label{app:randomized-pauli-implementation}

The canonical POVM in Eq.~\eqref{eq:parity-preserving-block-povm} generally requires a direct implementation of a multi-outcome POVM. As an alternative, one may use a randomized implementation in which, at each shot, a single component \(A\in\Kcal_B\) is selected, the corresponding Pauli operator \(\Gamma_A^{(B)}\) is measured, and the block output \(s_B\) is generated by classical post-processing.
Suppose that
\begin{align}
  \sum_{A\in\Kcal_B}\kappa_A^{(B)}\le1.
  \label{eq:randomized-pauli-l1-budget}
\end{align}
Select component \(A\) with probability \(\kappa_A^{(B)}\) and measure \(\Gamma_A^{(B)}\), obtaining an outcome \(y\in\{\pm1\}\). Conditioned on \(y\), output \(s_B\) uniformly from the configurations satisfying \(s^A=y\). With the remaining probability \(1-\sum_A\kappa_A^{(B)}\), output \(s_B\) uniformly at random. The resulting measurement induces the POVM
\begin{align}
  M_{s_B}^{\mathrm{rand}}
  =
  \frac1{2^{|B|}}
  \left(
    \Id+
    \sum_{A\in\Kcal_B}
    \kappa_A^{(B)}s^A\Gamma_A^{(B)}
  \right)
  \label{eq:randomized-pauli-induced-povm}
\end{align}
and therefore reproduces exactly the same input-output statistics as the canonical POVM with the same coefficients.

Thus, for mutually anticommuting components, the canonical POVM is subject to the \(\ell_2\)-type positivity budget in Eq.~\eqref{eq:l2-positivity-budget}, whereas this randomized Pauli implementation requires the stronger \(\ell_1\)-type constraint in Eq.~\eqref{eq:randomized-pauli-l1-budget}. If \(K\) components are treated symmetrically, the canonical construction allows a common preservation rate as large as \(1/\sqrt K\), whereas the randomized implementation allows at most \(1/K\). The randomized scheme is therefore easier to implement using standard Pauli measurements and classical post-processing, at the cost of a smaller achievable preservation rate.
\subsection{\texorpdfstring{Derivation of the POVM induced by the magic rounding of Teramoto et al.}{Derivation of the POVM induced by the magic rounding of Teramoto et al.}}
\label{app:teramoto-derivation}

For each \(k=(k_1,k_2)\in\{\pm1\}^2\), define the tetrahedral direction \(\vec n_k=(k_1,k_2,k_1k_2)/\sqrt3\) and the binary projective measurement
\[
N_{\pm|k}=\frac12(I\pm\vec n_k\cdot\vec\sigma).
\]
We choose the magic basis indexed by \(k\) uniformly at random, with probability \(1/4\), and perform the corresponding binary projective measurement. If the outcome is \(+\), we output \(s=k\); if the outcome is \(-\), we choose uniformly from the three outputs other than \(k\). Thus, the classical post-processing probabilities are
\[
K(s|+,k)=\delta_{s,k},
\qquad
K(s|-,k)=\frac{1-\delta_{s,k}}{3}.
\]
Combining the random basis choice, projective measurement, and classical post-processing gives the effective POVM
\begin{align}
  M_s^{\mathrm T}
  &=
  \frac14\sum_k\sum_{r=\pm}K(s|r,k)N_{r|k}
  \nonumber\\
  &=
  \frac14\left(N_{+|s}+\frac13\sum_{k\neq s}N_{-|k}\right).
\end{align}
Using the tetrahedral symmetry \(\sum_k\vec n_k=0\),
\[
\sum_kN_{-|k}=2I,
\]and hence
\[
\sum_{k\neq s}N_{-|k}=2I-N_{-|s}.
\]Substituting this into the expression above gives
\[
\begin{aligned}
M_s^{\mathrm T}
&=
\frac14\left[
N_{+|s}+\frac13(2I-N_{-|s})
\right]\\
&=
\frac14\left(
I+\frac23\vec n_s\cdot\vec\sigma
\right)\\
&=
\frac14\left[
I+\frac{2}{3\sqrt3}
(s_1X+s_2Y+s_1s_2Z)
\right],
\end{aligned}
\]which reproduces Eq.~\eqref{eq:teramoto-induced-povm}.
\bibliography{../literature/refs}

%apsrev4-2.bst 2019-01-14 (MD) hand-edited version of apsrev4-1.bst
%Control: key (0)
%Control: author (8) initials jnrlst
%Control: editor formatted (1) identically to author
%Control: production of article title (0) allowed
%Control: page (0) single
%Control: year (1) truncated
%Control: production of eprint (0) enabled
\begin{thebibliography}{18}%
\makeatletter
\providecommand \@ifxundefined [1]{%
 \@ifx{#1\undefined}
}%
\providecommand \@ifnum [1]{%
 \ifnum #1\expandafter \@firstoftwo
 \else \expandafter \@secondoftwo
 \fi
}%
\providecommand \@ifx [1]{%
 \ifx #1\expandafter \@firstoftwo
 \else \expandafter \@secondoftwo
 \fi
}%
\providecommand \natexlab [1]{#1}%
\providecommand \enquote  [1]{``#1''}%
\providecommand \bibnamefont  [1]{#1}%
\providecommand \bibfnamefont [1]{#1}%
\providecommand \citenamefont [1]{#1}%
\providecommand \href@noop [0]{\@secondoftwo}%
\providecommand \href [0]{\begingroup \@sanitize@url \@href}%
\providecommand \@href[1]{\@@startlink{#1}\@@href}%
\providecommand \@@href[1]{\endgroup#1\@@endlink}%
\providecommand \@sanitize@url [0]{\catcode `\\12\catcode `\$12\catcode `\&12\catcode `\#12\catcode `\^12\catcode `\_12\catcode `\%12\relax}%
\providecommand \@@startlink[1]{}%
\providecommand \@@endlink[0]{}%
\providecommand \url  [0]{\begingroup\@sanitize@url \@url }%
\providecommand \@url [1]{\endgroup\@href {#1}{\urlprefix }}%
\providecommand \urlprefix  [0]{URL }%
\providecommand \Eprint [0]{\href }%
\providecommand \doibase [0]{https://doi.org/}%
\providecommand \selectlanguage [0]{\@gobble}%
\providecommand \bibinfo  [0]{\@secondoftwo}%
\providecommand \bibfield  [0]{\@secondoftwo}%
\providecommand \translation [1]{[#1]}%
\providecommand \BibitemOpen [0]{}%
\providecommand \bibitemStop [0]{}%
\providecommand \bibitemNoStop [0]{.\EOS\space}%
\providecommand \EOS [0]{\spacefactor3000\relax}%
\providecommand \BibitemShut  [1]{\csname bibitem#1\endcsname}%
\let\auto@bib@innerbib\@empty
%</preamble>
\bibitem [{\citenamefont {Farhi}\ \emph {et~al.}(2014)\citenamefont {Farhi}, \citenamefont {Goldstone},\ and\ \citenamefont {Gutmann}}]{farhi2014}%
  \BibitemOpen
  \bibfield  {author} {\bibinfo {author} {\bibfnamefont {E.}~\bibnamefont {Farhi}}, \bibinfo {author} {\bibfnamefont {J.}~\bibnamefont {Goldstone}},\ and\ \bibinfo {author} {\bibfnamefont {S.}~\bibnamefont {Gutmann}},\ }\bibfield  {title} {\bibinfo {title} {A quantum approximate optimization algorithm},\ }\href@noop {} {\bibfield  {journal} {\bibinfo  {journal} {arXiv preprint arXiv:1411.4028}\ } (\bibinfo {year} {2014})}\BibitemShut {NoStop}%
\bibitem [{\citenamefont {Hadfield}\ \emph {et~al.}(2019)\citenamefont {Hadfield}, \citenamefont {Wang}, \citenamefont {O’gorman}, \citenamefont {Rieffel}, \citenamefont {Venturelli},\ and\ \citenamefont {Biswas}}]{hadfield2019}%
  \BibitemOpen
  \bibfield  {author} {\bibinfo {author} {\bibfnamefont {S.}~\bibnamefont {Hadfield}}, \bibinfo {author} {\bibfnamefont {Z.}~\bibnamefont {Wang}}, \bibinfo {author} {\bibfnamefont {B.}~\bibnamefont {O’gorman}}, \bibinfo {author} {\bibfnamefont {E.~G.}\ \bibnamefont {Rieffel}}, \bibinfo {author} {\bibfnamefont {D.}~\bibnamefont {Venturelli}},\ and\ \bibinfo {author} {\bibfnamefont {R.}~\bibnamefont {Biswas}},\ }\bibfield  {title} {\bibinfo {title} {From the quantum approximate optimization algorithm to a quantum alternating operator ansatz},\ }\href@noop {} {\bibfield  {journal} {\bibinfo  {journal} {Algorithms}\ }\textbf {\bibinfo {volume} {12}},\ \bibinfo {pages} {34} (\bibinfo {year} {2019})}\BibitemShut {NoStop}%
\bibitem [{\citenamefont {Zhu}\ \emph {et~al.}(2022)\citenamefont {Zhu}, \citenamefont {Tang}, \citenamefont {Barron}, \citenamefont {Mayhall}, \citenamefont {Barnes},\ and\ \citenamefont {Economou}}]{zhu2022}%
  \BibitemOpen
  \bibfield  {author} {\bibinfo {author} {\bibfnamefont {L.}~\bibnamefont {Zhu}}, \bibinfo {author} {\bibfnamefont {H.~L.}\ \bibnamefont {Tang}}, \bibinfo {author} {\bibfnamefont {G.~S.}\ \bibnamefont {Barron}}, \bibinfo {author} {\bibfnamefont {N.~J.}\ \bibnamefont {Mayhall}}, \bibinfo {author} {\bibfnamefont {E.}~\bibnamefont {Barnes}},\ and\ \bibinfo {author} {\bibfnamefont {S.~E.}\ \bibnamefont {Economou}},\ }\bibfield  {title} {\bibinfo {title} {Adaptive quantum approximate optimization algorithm for solving combinatorial problems on a quantum computer},\ }\href@noop {} {\bibfield  {journal} {\bibinfo  {journal} {Physical Review Research}\ }\textbf {\bibinfo {volume} {4}},\ \bibinfo {pages} {033029} (\bibinfo {year} {2022})}\BibitemShut {NoStop}%
\bibitem [{\citenamefont {Ambainis}\ \emph {et~al.}(1999)\citenamefont {Ambainis}, \citenamefont {Nayak}, \citenamefont {Ta-Shma},\ and\ \citenamefont {Vazirani}}]{ambainis1999}%
  \BibitemOpen
  \bibfield  {author} {\bibinfo {author} {\bibfnamefont {A.}~\bibnamefont {Ambainis}}, \bibinfo {author} {\bibfnamefont {A.}~\bibnamefont {Nayak}}, \bibinfo {author} {\bibfnamefont {A.}~\bibnamefont {Ta-Shma}},\ and\ \bibinfo {author} {\bibfnamefont {U.}~\bibnamefont {Vazirani}},\ }\bibfield  {title} {\bibinfo {title} {Dense quantum coding and a lower bound for 1-way quantum automata},\ }in\ \href@noop {} {\emph {\bibinfo {booktitle} {Proceedings of the 31st Annual ACM Symposium on Theory of Computing (STOC)}}}\ (\bibinfo {year} {1999})\ pp.\ \bibinfo {pages} {376--383}\BibitemShut {NoStop}%
\bibitem [{\citenamefont {Nayak}(1999)}]{nayak1999}%
  \BibitemOpen
  \bibfield  {author} {\bibinfo {author} {\bibfnamefont {A.}~\bibnamefont {Nayak}},\ }\bibfield  {title} {\bibinfo {title} {Optimal lower bounds for quantum automata and random access codes},\ }in\ \href@noop {} {\emph {\bibinfo {booktitle} {40th Annual Symposium on Foundations of Computer Science (Cat. No. 99CB37039)}}}\ (\bibinfo {organization} {IEEE},\ \bibinfo {year} {1999})\ pp.\ \bibinfo {pages} {369--376}\BibitemShut {NoStop}%
\bibitem [{\citenamefont {Fuller}\ \emph {et~al.}(2024)\citenamefont {Fuller}, \citenamefont {Hadfield}, \citenamefont {Glick}, \citenamefont {Imamichi}, \citenamefont {Itoko}, \citenamefont {Thompson}, \citenamefont {Jiao}, \citenamefont {Kagele}, \citenamefont {Blom-Schieber}, \citenamefont {Raymond} \emph {et~al.}}]{fuller2024}%
  \BibitemOpen
  \bibfield  {author} {\bibinfo {author} {\bibfnamefont {B.}~\bibnamefont {Fuller}}, \bibinfo {author} {\bibfnamefont {C.}~\bibnamefont {Hadfield}}, \bibinfo {author} {\bibfnamefont {J.~R.}\ \bibnamefont {Glick}}, \bibinfo {author} {\bibfnamefont {T.}~\bibnamefont {Imamichi}}, \bibinfo {author} {\bibfnamefont {T.}~\bibnamefont {Itoko}}, \bibinfo {author} {\bibfnamefont {R.~J.}\ \bibnamefont {Thompson}}, \bibinfo {author} {\bibfnamefont {Y.}~\bibnamefont {Jiao}}, \bibinfo {author} {\bibfnamefont {M.~M.}\ \bibnamefont {Kagele}}, \bibinfo {author} {\bibfnamefont {A.~W.}\ \bibnamefont {Blom-Schieber}}, \bibinfo {author} {\bibfnamefont {R.}~\bibnamefont {Raymond}}, \emph {et~al.},\ }\bibfield  {title} {\bibinfo {title} {Approximate solutions of combinatorial problems via quantum relaxations},\ }\href@noop {} {\bibfield  {journal} {\bibinfo  {journal} {IEEE Transactions on Quantum Engineering}\ }\textbf {\bibinfo {volume} {5}},\ \bibinfo {pages} {1} (\bibinfo {year} {2024})}\BibitemShut {NoStop}%
\bibitem [{\citenamefont {Teramoto}\ \emph {et~al.}(2023)\citenamefont {Teramoto}, \citenamefont {Raymond}, \citenamefont {Wakakuwa},\ and\ \citenamefont {Imai}}]{teramoto2023}%
  \BibitemOpen
  \bibfield  {author} {\bibinfo {author} {\bibfnamefont {K.}~\bibnamefont {Teramoto}}, \bibinfo {author} {\bibfnamefont {R.}~\bibnamefont {Raymond}}, \bibinfo {author} {\bibfnamefont {E.}~\bibnamefont {Wakakuwa}},\ and\ \bibinfo {author} {\bibfnamefont {H.}~\bibnamefont {Imai}},\ }\bibfield  {title} {\bibinfo {title} {Quantum-relaxation based optimization algorithms: Theoretical extensions},\ }\href@noop {} {\bibfield  {journal} {\bibinfo  {journal} {arXiv preprint arXiv:2302.09481}\ } (\bibinfo {year} {2023})}\BibitemShut {NoStop}%
\bibitem [{\citenamefont {He}\ \emph {et~al.}(2025)\citenamefont {He}, \citenamefont {Raymond}, \citenamefont {Shaydulin},\ and\ \citenamefont {Pistoia}}]{he2025}%
  \BibitemOpen
  \bibfield  {author} {\bibinfo {author} {\bibfnamefont {Z.}~\bibnamefont {He}}, \bibinfo {author} {\bibfnamefont {R.}~\bibnamefont {Raymond}}, \bibinfo {author} {\bibfnamefont {R.}~\bibnamefont {Shaydulin}},\ and\ \bibinfo {author} {\bibfnamefont {M.}~\bibnamefont {Pistoia}},\ }\bibfield  {title} {\bibinfo {title} {Non-variational quantum random access optimization with alternating operator ansatz},\ }\href@noop {} {\bibfield  {journal} {\bibinfo  {journal} {Scientific Reports}\ }\textbf {\bibinfo {volume} {15}},\ \bibinfo {pages} {29191} (\bibinfo {year} {2025})}\BibitemShut {NoStop}%
\bibitem [{\citenamefont {Kondo}\ \emph {et~al.}(2025)\citenamefont {Kondo}, \citenamefont {Sato}, \citenamefont {Raymond},\ and\ \citenamefont {Yamamoto}}]{kondo2025}%
  \BibitemOpen
  \bibfield  {author} {\bibinfo {author} {\bibfnamefont {R.}~\bibnamefont {Kondo}}, \bibinfo {author} {\bibfnamefont {Y.}~\bibnamefont {Sato}}, \bibinfo {author} {\bibfnamefont {R.}~\bibnamefont {Raymond}},\ and\ \bibinfo {author} {\bibfnamefont {N.}~\bibnamefont {Yamamoto}},\ }\bibfield  {title} {\bibinfo {title} {Recursive quantum relaxation for combinatorial optimization problems},\ }\href@noop {} {\bibfield  {journal} {\bibinfo  {journal} {Quantum}\ }\textbf {\bibinfo {volume} {9}},\ \bibinfo {pages} {1594} (\bibinfo {year} {2025})}\BibitemShut {NoStop}%
\bibitem [{\citenamefont {Ambainis}\ \emph {et~al.}(2002)\citenamefont {Ambainis}, \citenamefont {Nayak}, \citenamefont {Ta-Shma},\ and\ \citenamefont {Vazirani}}]{ambainis2002}%
  \BibitemOpen
  \bibfield  {author} {\bibinfo {author} {\bibfnamefont {A.}~\bibnamefont {Ambainis}}, \bibinfo {author} {\bibfnamefont {A.}~\bibnamefont {Nayak}}, \bibinfo {author} {\bibfnamefont {A.}~\bibnamefont {Ta-Shma}},\ and\ \bibinfo {author} {\bibfnamefont {U.}~\bibnamefont {Vazirani}},\ }\bibfield  {title} {\bibinfo {title} {Dense quantum coding and quantum finite automata},\ }\href@noop {} {\bibfield  {journal} {\bibinfo  {journal} {Journal of the ACM (JACM)}\ }\textbf {\bibinfo {volume} {49}},\ \bibinfo {pages} {496} (\bibinfo {year} {2002})}\BibitemShut {NoStop}%
\bibitem [{\citenamefont {Imamichi}\ and\ \citenamefont {Raymond}(2018)}]{imamichi2018}%
  \BibitemOpen
  \bibfield  {author} {\bibinfo {author} {\bibfnamefont {T.}~\bibnamefont {Imamichi}}\ and\ \bibinfo {author} {\bibfnamefont {R.}~\bibnamefont {Raymond}},\ }\href@noop {} {\bibinfo {title} {Constructions of quantum random access codes}},\ \bibinfo {howpublished} {Asian Quantum Information Symposium (AQIS)} (\bibinfo {year} {2018}),\ \bibinfo {note} {available at \url{http://www.ngc.is.ritsumei.ac.jp/~ger/static/AQIS18/OnlineBooklet/122}}\BibitemShut {NoStop}%
\bibitem [{\citenamefont {Suzuki}(2026)}]{suzuki2026nnminus1qrac}%
  \BibitemOpen
  \bibfield  {author} {\bibinfo {author} {\bibfnamefont {T.}~\bibnamefont {Suzuki}},\ }\bibfield  {title} {\bibinfo {title} {Analytical construction of {$(n,n-1)$} quantum random access codes saturating the conjectured bound},\ }\href@noop {} {\bibfield  {journal} {\bibinfo  {journal} {arXiv preprint arXiv:2601.19190}\ } (\bibinfo {year} {2026})}\BibitemShut {NoStop}%
\bibitem [{\citenamefont {Kondo}\ \emph {et~al.}(2026)\citenamefont {Kondo}, \citenamefont {Sato}, \citenamefont {Yano}, \citenamefont {Maeda}, \citenamefont {Ito},\ and\ \citenamefont {Yamamoto}}]{kondo2026rac}%
  \BibitemOpen
  \bibfield  {author} {\bibinfo {author} {\bibfnamefont {R.}~\bibnamefont {Kondo}}, \bibinfo {author} {\bibfnamefont {Y.}~\bibnamefont {Sato}}, \bibinfo {author} {\bibfnamefont {H.}~\bibnamefont {Yano}}, \bibinfo {author} {\bibfnamefont {Y.}~\bibnamefont {Maeda}}, \bibinfo {author} {\bibfnamefont {K.}~\bibnamefont {Ito}},\ and\ \bibinfo {author} {\bibfnamefont {N.}~\bibnamefont {Yamamoto}},\ }\bibfield  {title} {\bibinfo {title} {Random access codes: Explicit constructions, optimality, and classical-quantum gaps},\ }\href@noop {} {\bibfield  {journal} {\bibinfo  {journal} {arXiv preprint arXiv:2604.21274}\ } (\bibinfo {year} {2026})}\BibitemShut {NoStop}%
\bibitem [{\citenamefont {Matsuyama}\ \emph {et~al.}(2024)\citenamefont {Matsuyama}, \citenamefont {Huang}, \citenamefont {Nishimura},\ and\ \citenamefont {Yamashiro}}]{matsuyama2024internal}%
  \BibitemOpen
  \bibfield  {author} {\bibinfo {author} {\bibfnamefont {H.}~\bibnamefont {Matsuyama}}, \bibinfo {author} {\bibfnamefont {W.-h.}\ \bibnamefont {Huang}}, \bibinfo {author} {\bibfnamefont {K.}~\bibnamefont {Nishimura}},\ and\ \bibinfo {author} {\bibfnamefont {Y.}~\bibnamefont {Yamashiro}},\ }\bibfield  {title} {\bibinfo {title} {Efficient internal strategies in quantum relaxation based branch-and-bound},\ }\href@noop {} {\bibfield  {journal} {\bibinfo  {journal} {arXiv preprint arXiv:2405.00935}\ } (\bibinfo {year} {2024})}\BibitemShut {NoStop}%
\bibitem [{\citenamefont {{Qiskit Optimization contributors}}(2023)}]{qiskitOptimizationQRAOEncoding2023}%
  \BibitemOpen
  \bibfield  {author} {\bibinfo {author} {\bibnamefont {{Qiskit Optimization contributors}}},\ }\href@noop {} {\bibinfo {title} {{Qiskit Optimization}: \texttt{quantum\_random\_access\_encoding.py}}},\ \bibinfo {howpublished} {GitHub repository, commit 09134bc} (\bibinfo {year} {2023}),\ \bibinfo {note} {\url{https://github.com/qiskit-community/qiskit-optimization/blob/09134bc0c1da7dd4852613c24cfeb5c32dcbcb54/qiskit_optimization/algorithms/qrao/quantum_random_access_encoding.py}}\BibitemShut {NoStop}%
\bibitem [{\citenamefont {Sharma}\ \emph {et~al.}(2024)\citenamefont {Sharma}, \citenamefont {Yan}, \citenamefont {Lau},\ and\ \citenamefont {Raymond}}]{sharma2024knapsack}%
  \BibitemOpen
  \bibfield  {author} {\bibinfo {author} {\bibfnamefont {M.}~\bibnamefont {Sharma}}, \bibinfo {author} {\bibfnamefont {J.}~\bibnamefont {Yan}}, \bibinfo {author} {\bibfnamefont {H.~C.}\ \bibnamefont {Lau}},\ and\ \bibinfo {author} {\bibfnamefont {R.}~\bibnamefont {Raymond}},\ }\bibfield  {title} {\bibinfo {title} {Quantum relaxation for solving multiple knapsack problems},\ }in\ \href@noop {} {\emph {\bibinfo {booktitle} {2024 IEEE International Conference on Quantum Computing and Engineering (QCE)}}}\ (\bibinfo {organization} {IEEE},\ \bibinfo {year} {2024})\ pp.\ \bibinfo {pages} {692--698}\BibitemShut {NoStop}%
\bibitem [{\citenamefont {Sharma}\ and\ \citenamefont {Lau}(2025)}]{sharma2025comparative}%
  \BibitemOpen
  \bibfield  {author} {\bibinfo {author} {\bibfnamefont {M.}~\bibnamefont {Sharma}}\ and\ \bibinfo {author} {\bibfnamefont {H.~C.}\ \bibnamefont {Lau}},\ }\bibfield  {title} {\bibinfo {title} {A comparative study of quantum optimization techniques for solving combinatorial optimization benchmark problems},\ }\href@noop {} {\bibfield  {journal} {\bibinfo  {journal} {arXiv preprint arXiv:2503.12121}\ } (\bibinfo {year} {2025})}\BibitemShut {NoStop}%
\bibitem [{\citenamefont {Renes}\ \emph {et~al.}(2004)\citenamefont {Renes}, \citenamefont {Blume-Kohout}, \citenamefont {Scott},\ and\ \citenamefont {Caves}}]{renes2004sic}%
  \BibitemOpen
  \bibfield  {author} {\bibinfo {author} {\bibfnamefont {J.~M.}\ \bibnamefont {Renes}}, \bibinfo {author} {\bibfnamefont {R.}~\bibnamefont {Blume-Kohout}}, \bibinfo {author} {\bibfnamefont {A.~J.}\ \bibnamefont {Scott}},\ and\ \bibinfo {author} {\bibfnamefont {C.~M.}\ \bibnamefont {Caves}},\ }\bibfield  {title} {\bibinfo {title} {Symmetric informationally complete quantum measurements},\ }\href@noop {} {\bibfield  {journal} {\bibinfo  {journal} {Journal of Mathematical Physics}\ }\textbf {\bibinfo {volume} {45}},\ \bibinfo {pages} {2171} (\bibinfo {year} {2004})}\BibitemShut {NoStop}%
\end{thebibliography}%

\end{document}